\begin{document}

\title[]
{Quasineutral limit for the quantum Navier-Stokes-Poisson equation}

\author[M. Li, X. Pu and S. Wang]{Min Li,\ \ \  Xueke Pu\ \ \ and\ \ \  Shu Wang}  

\address{Min Li \newline
School of Mathematics and statistics, Chongqing University, Chongqing 401331, P.R.China} \email{530294442@qq.com}

\address{Xueke Pu \newline
Department of Mathematics, Chongqing University, Chongqing 401331, P.R.China} \email{ xuekepu@cqu.edu.cn}

\address{Shu Wang \newline
College of Applied Sciences, Beijing University of Technology, Beijing 100124, People's Republic of China
}
\email{wangshu@bjut.edu.cn}

\thanks{The second author is supported in part by NSFC (11471057) and the Fundamental Research Funds for the Central Universities (Project No. 106112016CDJZR105501).The third author is supported by NSFC (11371042) and the key fundation of Beijing Municipal Education Commission.}
\subjclass[2010]{76Y05; 35B40; 35C20} \keywords{}

\begin{abstract}

In this paper, we study the quasineutral limit and asymptotic behaviors for the quantum Navier-Stokes-Possion equation. We apply a formal expansion according to Debye length and derive the neutral incompressible Navier-Stokes equation. To establish this limit mathematically rigorously, we derive uniform (in Debye length) estimates for the remainders, for well-prepared initial data. It is demonstrated that the quantum effect do play important roles in the estimates and the norm introduced depends on the Planck constant $\hbar>0$.

Keywords: quantum Navier-Stokes-Possion system; quasineutral limit; uniform energy estimates
\end{abstract}

\maketitle \numberwithin{equation}{section}
\newtheorem{proposition}{Proposition}[section]
\newtheorem{theorem}{Theorem}[section]
\newtheorem{lemma}[theorem]{Lemma}
\newtheorem{remark}[theorem]{Remark}
\newtheorem{hypothesis}[theorem]{Hypothesis}
\newtheorem{definition}{Definition}[section]
\newtheorem{corollary}{Corollary}[section]
\newtheorem{assumption}{Assumption}[section]

\section{Introduction}
\setcounter{section}{1}\setcounter{equation}{0}
The purpose of this paper is to consider the following hydrodynamic Navier-Stokes-Possion equation, describing the motion of the electrons in plasmas with the electric potential \cite{Gardner94},
\begin{subequations}\label{equ1}
\begin{numcases}{}
\frac{\partial n}{\partial t}+\mathrm{div}(nu)=0,\label{e1n}\\
\frac{\partial (nu)}{\partial t}+\mathrm{div}(nu\otimes u-P)+n\nabla\mathrm{V}=\mathrm{div}S,\label{e1u}\\
\frac{\partial W}{\partial t}+\mathrm{div}(uW-uP+q)+nu\cdot\nabla\mathrm{V}=\mathrm{div}(uS),\label{e1t}\\
-\varepsilon\Delta V=n-1,
\end{numcases}
\end{subequations}
where $V=-e\phi$ is the electrostatic potential, $n$ is the electron density,  $u=(u_1,u_2,u_3)$ is the velocity, $nu$ is the momentum density, $P=(P_{ij})_{3\times3}$ is the stress tensor, $W$ is the energy density and $q$ is the heat flux. It is customary that the heat flux is assumed to obey the Fourier law $q=-\kappa\nabla T$. To close the moment expansion at the third order, we define the above stress tensor $P$ and $W$ in terms of the density $n,u$ and $T$ by
\[
P_{ij}=-nT\delta_{ij}+\frac{\hbar^2n}{12}\frac{\partial^2}{\partial x_i\partial x_j}\log n,
\]
and
\[
W=\frac{3}{2}nT+\frac12n|u|^2-\frac{\hbar^2n}{24m}\Delta\log n,
\]
respectively, where $\hbar>0$ is the Planck constant and is small compared to macro quantities. We have omitted the terms that are $O(\hbar^4)$. The viscous stress tensor $S$ is given by
\[
S=\mu(\nabla u+(\nabla u)^{\top})+\lambda(\mathrm{div}u)I,
\]
to include the viscosity, with $\mu>0$ and $2\mu+3\lambda\geq0$. The quantum stress tensor is closely related to the quantum Bohm potential \cite{Wigner32,Bohm52}
\[
Q(n)=-\frac{\hbar^2}{2}\frac{\Delta\sqrt{n}}{\sqrt{n}},
\]
through the formula
$$-\frac{\partial}{\partial x_i}P_{ij}=\frac{\partial}{\partial x_i}(nT)+\frac{n}{3}\frac{\partial Q}{\partial x_i}.$$
By simple calculations, it gives that
\[
-n\nabla Q(n)=\frac{\hbar^2}{4}\Delta\nabla n -\frac{\hbar^2}{4}\{\frac{(\Delta n\nabla n+\nabla n\cdot\nabla^{2}n)}{n}-\frac{(\nabla n\cdot\nabla n)\nabla n}{n^{2}}\}.
\]

In the above equation \eqref{equ1}, $\varepsilon$ is the scaled squared Debye length, which vary by many orders of magnitude. Typical values of the Debye length go from $10^{-3}m$ to $10^{-8}m$. The Debye length is a fundamental and important parameter in plasmas, below which charge separation occurs. In practical applications, the Debye length is small compared to characteristic observation length and hence is interesting to study the limit as $\varepsilon$ goes to zero. Formally, this will leads to an incompressible Navier-Stokes equation (See Sect. \ref{Sect1.1} below) and this paper aims to study this limit mathematically rigorously.

Before we proceed, we recall some derivations for the quantum fluid dynamics equation. The full compressible quantum Navier-Stokes-Poisson (FCQNSP) model \eqref{equ1} has many practical applications in fluid models of nucleus, superconductivity and superfluidity and semiconductor devices, and was derived by Gardner \cite{Gardner94} from a moment expansion of the Wigner-Boltzmann equation. When there is no electrostatic potential $V$, it reduces to the full compressible quantum Navier-Stokes (FCQNS) equation, which was also derived by J\"{u}ngel and Mili\^{s}i\'{c} \cite{JM11} from the collisional Wigner equation \cite{Wigner32} by performing a Chapman-Enskog expansion. See also the references therein for the derivation of other quantum hydrodynamics equations from Wigner equation. In particular, Degond and Ringhofer \cite{DR03} formally derived nonlocal quantum hydrodynamic models, where the quantum stress tensor may by non-diagonal and the quantum heat flux many not vanish, in contrast to the classical equations. For more models taking quantum effects into consideration, see also \cite{DGMR06,DGMR08,DMR05,DGM07b}. By performing the zero-mean-free-path limit in the collisional Wigner equation, Brull and M\'{e}hats derived the isothermal quantum Navier-Stokes equation \cite{BM10}.

Although have many applications in various fields of physics, the full quantum Navier-Stokes-Poisson equation \eqref{equ1} and the full quantum Navier-Stokes equation are less studied mathematically rigorously, to our best knowledge. Without quantum effects, they are all comprehensively studied in various aspects of mathematics, see for example \cite{Temam01,Lions96,CF88,Stein1970}. For the full compressible quantum Navier-Stokes equation, the only result known to the authors is the global existence of small classical solutions in $R^3$ when the viscosity and heat conductivity are present, obtained recently in \cite{PGG15} following the seminal paper of Matsumura and Nishida \cite{MN80}. For other results on the quantum hydrodynamic equations and related models, the readers may refer to \cite{Haas11,JLW14,Jungel10,DS85,LL05,HL94,HL96} and the references therein.

Next, we recall some mathematical results on quasi-neutral limit for various hydrodynamic equation. Indeed, in the recent two decades, the quasi-neutral limit problem has attracted many attentions of physicists and applied mathematicians. To the author's best knowledge, the first quasi-neutral limit is on the Euler-Poisson equation for ions with positive ion temperature by Cordier and Grenier \cite{CG00}. This result was recently generalized to the Euler-Poisson equation with zero ion temperature for cold plasmas in \cite{PGQ15}. Since then, quasi-neutral limit results have been obtained for various hydrodynamic models in plasmas. See \cite{Wang04} for the Euler/Navier-Stokes-Poisson system with and without viscosity, \cite{WJ06} for the Navier-Stokes-Poisson equation, \cite{PWY06} for the Cauchy problem for the non-isentropic Euler-Poisson equation with prepared initial data, \cite{JLL09} for the non-isentropic compressible Navier-Stokes-Poisson, \cite{DM12} for the analysis of oscillations and defect measures, to list only a few. See also \cite{DM08,GM01,GM2001,CDM13,DM15} for the other related limits. For hydrodynamic models with capillarity or Korteweg effects, there are also some quasineutral limit results, see \cite{LY14,BDD05} for example. There is a vast literature concerning the quasi-neutral limit for various models, and we cannot give a complete list here. The interested readers may refer to these papers above and the references therein.

As pointed out above, we aim to study the quasineutral limit mathematically rigorously. To be precise, we confine ourselves to the Cauchy problem for \eqref{equ1} in $\Bbb R^3$. We show that as Debye length goes to zero, the smooth solutions converges to solutions of the incompressible Navier-Stokes equation \eqref{e0}, at least for well-prepared initial data (The case for ill-prepared initial data will be treated somewhere else). We also obtained convergence rate w.r.t. the Debye length parameter $\varepsilon$. The main result is stated in Theorem \ref{thm3}. Due to the special structure of \eqref{equ1}, to get uniform in $\varepsilon$ estimates for the remainder terms we need to carefully use the structure of the equation and construct suitable energy norms in estimates. The norm we finally adopt is the triple norm defined in \eqref{def-A} incorporating the quantum parameter $\hbar>0$. Since higher order terms appear in this triple norm, much effort is needed to close the estimate in the proof.

In the rest of Introduction, we first give the formal expansions and derive the incompressible Navier-Stokes equation \eqref{e0} for the leading terms and then we derive the remainder equation \eqref{rem1} and state the main result in Theorem \ref{thm3}.

\subsection{Formal Expansions}\label{Sect1.1}
In terms of $(n,u,T,\phi)$, the above quantum Navier-Stokes-Poisson system \eqref{equ1} with the electric potential can be rewritten as
\begin{subequations}\label{equ2}
\begin{numcases}{}
\partial_tn+\nabla\cdot(n{u})=0,\\
\partial_t{u}+{u}\cdot\nabla{u}+\frac{1}{n}\nabla(nT)-\frac{\hbar^2}{12n}\mathrm{div}\{n(\nabla\otimes\nabla)\log n\}-\nabla\phi-\frac{\mu}{n}\Delta u\nonumber\\
\ \ \ \ \ \ \ -\frac{\mu+\lambda}{n}\nabla\mathrm{div}u=0,\\
\partial_tT+{u}\cdot\nabla T+\frac23T\nabla\cdot{u}-\frac2{3n}\nabla\cdot(\kappa\nabla T)+\frac{\hbar^2}{36n}\nabla\cdot(n\Delta{u}) \notag\\
\ \ \ \ \ \ \ -\frac{2}{3n}\Big\{\frac{\mu}{2}|\nabla u+(\nabla u)^{\top}|^{2} +\lambda(\mathrm{div}u)^{2}\Big\}=0,\\
\varepsilon\Delta\phi=n-1.
\end{numcases}
\end{subequations}

Plugging the following formal expansion
\begin{subequations}\label{expan-formal}
\begin{numcases}{}
n=n^{(0)}+\varepsilon^{1} n^{(1)}+\varepsilon^{2} n^{(2)}+\varepsilon^{3} n^{(3)}+\varepsilon^{4} n^{(4)}+\cdots,\\
u=u^{(0)}+\varepsilon^{1} u^{(1)}+\varepsilon^{2} u^{(2)}+\varepsilon^{3}u^{(3)}+\varepsilon^{4} u^{(4)}+\cdots,\\
T=T^{(0)}+\varepsilon^{1} T^{(1)}+\varepsilon^{2} T^{(2)}+\varepsilon^{3}T^{(3)}+\varepsilon^{4} T^{(4)}+\cdots,\\
\phi=\phi^{(0)}+\varepsilon^{1}\phi^{(1)}+\varepsilon^{2}\phi^{(2)} +\varepsilon^{3}\phi^{(3)}+\varepsilon^{4} \phi^{(4)}+\cdots,
\end{numcases}
\end{subequations}
into \eqref{equ2}, we get a power series of $\varepsilon$. At $O(1)$ order, we obtain $n^{(0)}=1$ and the system for $(u^{(0)},T^{(0)},\phi^{(0)})$,
\begin{subequations}\label{e0}
\begin{numcases}{(\mathcal S_0)}
\mathrm{div}u^{(0)}=0,\\
\partial_t{u^{(0)}}+{u^{(0)}}\cdot\nabla{u^{(0)}}+\nabla T^{(0)}-\nabla\phi^{(0)}-\mu\Delta u^{(0)}=0,\\
\partial_tT^{(0)}+{u^{(0)}}\cdot\nabla T^{(0)}-\frac{2\kappa}{3}\Delta T^{(0)}-\frac{\mu}{3}|\nabla u^{(0)}+(\nabla u^{(0)})^{\top}|^{2}=0.
\end{numcases}
\end{subequations}
Formally, as $\varepsilon\to0$, we known that the solutions of \eqref{equ2} should converge to those of \eqref{e0}. For the solvability of $(u^{(0)},T^{(0)},\phi^{(0)})$, we have the following theorem. See \cite{Lions96} for details.
\begin{theorem}\label{thm1}
Let $\tilde s\geq\frac N2+1$. Then for any given initial data $(u^{(0)}_{0},T^{(0)}_{0},\phi^{(0)}_{0})\in H^{\tilde s+3}$ and $\mathrm{div}u^{(0)}_{0}=0$, there exists some $\tau_*>0$ such that the initial value problem \eqref{e0} has a unique solution such that for any $\tau<\tau_*$, the following holds
\begin{equation}
\begin{split}
\sup_{t\in[0,\tau]}\|(u^{(0)},T^{(0)},\phi^{(0)})\|_{H^{\tilde{s}+3}}&+\|(u^{(0)},T^{(0)})\|_{L^{2}(0,\tau;{H^{\tilde{s}+4}})}\\&\leq C\|(u^{(0)},T^{(0)},\phi^{(0)})(0)\|_{H^{\tilde{s}+3}},
\end{split}
\end{equation}
where $C$ is a constant.
\end{theorem}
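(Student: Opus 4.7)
The system $(\mathcal S_0)$ is, up to identification of the pressure, the three-dimensional incompressible Navier--Stokes equation coupled to a convection--diffusion equation for $T^{(0)}$ with a quadratic source in $\nabla u^{(0)}$. The plan is to reduce the problem to the standard incompressible setting, solve for $(u^{(0)},T^{(0)})$ by energy methods, and recover $\phi^{(0)}$ afterwards from an elliptic equation.

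First I would observe that $\nabla T^{(0)}$ and $\nabla\phi^{(0)}$ enter the momentum equation only through the combination $\nabla(T^{(0)}-\phi^{(0)})$. Setting $\pi:=\phi^{(0)}-T^{(0)}$, the velocity equation becomes the standard incompressible Navier--Stokes system $\partial_t u^{(0)}+u^{(0)}\cdot\nabla u^{(0)}-\nabla\pi-\mu\Delta u^{(0)}=0$ with $\mathrm{div}\,u^{(0)}=0$, and taking the divergence yields the Poisson equation $\Delta\pi=\mathrm{div}(u^{(0)}\cdot\nabla u^{(0)})$ determining $\pi$ from $u^{(0)}$. One then solves the coupled problem for $(u^{(0)},T^{(0)})$ and defines $\phi^{(0)}:=\pi+T^{(0)}$, interpreting the datum $\phi_0^{(0)}$ as a compatibility condition fixed by $u_0^{(0)}$ and $T_0^{(0)}$.

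For existence I would use a Friedrichs-mollifier or Galerkin approximation scheme applied to the $(u^{(0)},T^{(0)})$ system and derive uniform a priori estimates in $H^{\tilde s+3}$. Applying $\partial^\alpha$ for $|\alpha|\le\tilde s+3$, pairing with $\partial^\alpha u^{(0)}$ in $L^2$, and using the Leray projector to eliminate $\nabla\pi$, the convective term is handled by a Kato--Ponce commutator estimate together with the Sobolev embedding $H^{\tilde s}\hookrightarrow L^\infty$ (valid since $\tilde s\ge N/2+1$), yielding $|\langle\partial^\alpha(u\cdot\nabla u),\partial^\alpha u\rangle|\le C\|\nabla u\|_{L^\infty}\|u\|_{H^{\tilde s+3}}^2$. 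The viscous and heat-conduction terms furnish the dissipation $\mu\|u^{(0)}\|_{L^2_tH^{\tilde s+4}}^2$ and $(2\kappa/3)\|T^{(0)}\|_{L^2_tH^{\tilde s+4}}^2$. The quadratic source $|\nabla u^{(0)}+(\nabla u^{(0)})^{\top}|^2$ in the temperature equation is estimated by the Moser algebra inequality to be $\le C\|u^{(0)}\|_{H^{\tilde s+3}}\|u^{(0)}\|_{H^{\tilde s+4}}$.

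The main obstacle will be closing the Gr\"onwall estimate in the presence of this quadratic source, because it brings an $\|u^{(0)}\|_{H^{\tilde s+4}}^2$ factor into the $T^{(0)}$ inequality that is not controlled by the heat dissipation alone. The resolution is to add the velocity and temperature inequalities: the term is absorbed by the viscous dissipation $\mu\|u^{(0)}\|_{L^2_tH^{\tilde s+4}}^2$ already produced by the velocity estimate (with a small time factor if needed), after which a standard continuity argument yields existence on some $[0,\tau_*)$ with $\tau_*$ depending only on the initial $H^{\tilde s+3}$ norm. Elliptic regularity for $\Delta\pi=\mathrm{div}(u^{(0)}\cdot\nabla u^{(0)})$ gives $\|\phi^{(0)}\|_{H^{\tilde s+3}}\lesssim\|T^{(0)}\|_{H^{\tilde s+3}}+\|u^{(0)}\|_{H^{\tilde s+3}}^2$, and uniqueness follows from an $L^2$-energy estimate on the difference of two solutions.
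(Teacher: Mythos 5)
The paper does not actually supply a proof of this theorem; it simply appeals to standard incompressible Navier--Stokes theory with the remark ``See [Lions96] for details,'' so there is no in-paper argument to compare yours against in detail. Your approach --- introducing the effective pressure $\pi=\phi^{(0)}-T^{(0)}$, solving the resulting standard incompressible Navier--Stokes system by Galerkin/Friedrichs approximation with Kato--Ponce commutator estimates, and handling the quadratic heating source by Moser's algebra estimate together with absorption into the viscous dissipation --- is a correct and reasonable way to fill in that citation. Two small refinements are worth noting. First, after the substitution $\pi=\phi^{(0)}-T^{(0)}$ the system is not merely coupled but \emph{triangular}: the velocity equation involves only $u^{(0)}$ and $\pi$, so one can first solve the standard NSE for $(u^{(0)},\pi)$ on $[0,\tau_*)$, then treat the temperature equation as a \emph{linear} convection--diffusion problem with the now-known source $\frac{\mu}{3}|\nabla u^{(0)}+(\nabla u^{(0)})^{\top}|^2\in L^2(0,\tau;H^{\tilde s+3})$, and finally set $\phi^{(0)}=\pi+T^{(0)}$. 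This avoids the coupling gymnastics in your ``main obstacle'' paragraph: the absorption into $\mu\|u^{(0)}\|^2_{L^2_tH^{\tilde s+4}}$ you describe does work, but it is unnecessary if one exploits the triangular structure. Second, your observation that $\phi^{(0)}_0$ is a compatibility condition rather than a free datum (since $\pi$ is determined by $u^{(0)}$ through $\Delta\pi=\mathrm{div}(u^{(0)}\cdot\nabla u^{(0)})$ up to additive constants) is correct and is a wrinkle the theorem statement glosses over; it is worth keeping explicit. One cosmetic imprecision: the source estimate you write, $\le C\|u^{(0)}\|_{H^{\tilde s+3}}\|u^{(0)}\|_{H^{\tilde s+4}}$, is the bound on $\||\nabla u^{(0)}+(\nabla u^{(0)})^{\top}|^2\|_{H^{\tilde s+3}}$; after pairing with $\partial^\alpha T^{(0)}$ and using Young's inequality you obtain the $\eta\|u^{(0)}\|^2_{H^{\tilde s+4}}+C_\eta\|u^{(0)}\|^2_{H^{\tilde s+3}}\|T^{(0)}\|^2_{H^{\tilde s+3}}$ split, which is what actually gets absorbed and Gr\"onwalled. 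With these points cleaned up, your argument gives the stated existence, uniqueness and $H^{\tilde s+3}$/$L^2_tH^{\tilde s+4}$ bounds.
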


Next, we need to find out the equations satisfied by $(n^{(1)},u^{(1)},T^{(1)},\phi^{(1)})$. To this end,  taking $(n^{(0)},u^{(0)},T^{(0)},\phi^{(0)})$ as known functions and at order $O(\varepsilon)$, we obtain
\begin{subequations}\label{j1}
\begin{numcases}{(\mathcal S_1)}
\mathrm{div}u^{(1)}=-\partial_{t}\Delta\phi^{(0)} -u^{(0)}\cdot\nabla\Delta\phi^{(0)},\\
\partial_t{u^{(1)}}+{u^{(0)}}\cdot\nabla{u^{(1)}} +{u^{(1)}}\cdot\nabla{u^{(0)}}+\nabla T^{(1)}-\nabla\phi^{(1)}-\mu\Delta u^{(1)}=-f_{0},\\
\partial_tT^{(1)}+{u^{(1)}}\cdot\nabla T^{(0)}+{u^{(0)}}\cdot\nabla T^{(1)}-\frac{2\kappa}{3}\Delta T^{(1)}\notag\\\ \ \ \ \ \ \ -\frac{2\mu}{3}(\nabla u^{(0)}+(\nabla u^{(0)})^{\top})(\nabla u^{(1)}+(\nabla u^{(1)})^{\top})=-g_{0},\\
\Delta\phi^{(0)}=n^{(1)},
\end{numcases}
\end{subequations}
where $f_{0}$ and $g_{0}$ depend only on $(u^{(0)},T^{(0)},\phi^{(0)})$ and are given by the following
\begin{subequations}\label{def1}
\begin{numcases}{}
f_{0}=T^{(0)}\nabla\Delta\phi^{(0)}-\frac{\hbar^{2}}{12}\nabla\Delta\Delta\phi^{(0)}+\mu(\Delta\phi^{(0)}\Delta u^{(0)})+(\mu+\lambda)\nabla(\partial_{t}\Delta\phi^{(0)}\notag\\\ \ \ \ \ \ \  +u^{(0)}\cdot\nabla\Delta\phi^{(0)}),\\
g_{0}=\frac 23T^{(0)}(-\partial_{t}\Delta\phi^{(0)}-u^{(0)}\cdot\nabla\Delta\phi^{(0)})+\frac{2\kappa}{3}\Delta\phi^{(0)}\Delta T^{(0)}\notag\\\ \ \ \ \ \ \ +\frac{\hbar^{2}}{36}\Big\{\Delta(-\partial_{t}\Delta\phi^{(0)} -u^{(0)}\cdot\nabla\Delta\phi^{(0)})+\nabla\Delta\phi^{(0)}\cdot\Delta u^{(0)}\Big\}\notag\\\ \ \ \ \ \ \ +\frac{2\mu}{3}\Delta\phi^{(0)}(\frac 12|\nabla u^{(0)}+(\nabla u^{(0)})^{\top}|^{2}).
\end{numcases}
\end{subequations}

Similarly, at the $O(\varepsilon^2)$ order, we obtain for $(n^{(2)},u^{(2)},T^{(2)},\phi^{(2)})$ that
\begin{subequations}\label{j2}
\begin{numcases}{(\mathcal S_2)}
\mathrm{div}u^{(2)}=-\partial_{t}\Delta\phi^{(1)} -\mathrm{div}(\Delta\phi^{(1)}u^{(1)} +\Delta\phi^{(1)}u^{(0)}),\label{j21}\\
\partial_t{u^{(2)}}+{u^{(0)}}\cdot\nabla{u^{(2)}} +{u^{(2)}}\cdot\nabla{u^{(0)}}+\nabla T^{(2)}-\nabla\phi^{(2)}-\mu\Delta u^{(2)}\notag\\ \ \ \ \ \ \ \ \ \ \ =-f_{1},\label{j22}\ \ \ \ \ \ \ \ \ \ \\
\partial_tT^{(2)}+{u^{(0)}}\cdot\nabla T^{(2)}+{u^{(2)}}\cdot\nabla T^{(0)}-\frac{2\kappa}{3}\Delta T^{(2)}\notag\\ \ \ \ \ \ \ \ \ \ \ -\frac{2\mu}{3}\{(\nabla u^{(0)}+(\nabla u^{(0)})^{\top})(\nabla u^{(2)}+(\nabla u^{(2)})^{\top})=-g_{1}\label{j23},\\
\Delta\phi^{(1)}=n^{(2)},
\end{numcases}
\end{subequations}
where $f_{1}$ and $g_{1}$ depend only on the known quantities $(n^{(i)},u^{(i)},T^{(i)},\phi^{(i)})$ for $i=0,1$.

By induction, let $N\geq 3$ and $1\leq k\leq N$ be integers and assume that all the systems for $(n^{(i)},u^{(i)},T^{(i)},\phi^{(i)})$ for $0\leq i\leq k-1$ have already been obtained. Then at the order $O(\varepsilon^k)$, we derive the system satisfied by $(n^{(k)},u^{(k)},T^{(k)},\phi^{(k)})$ as follows,
\begin{subequations}\label{jk}
\begin{numcases}{(\mathcal S_k)}
\mathrm{div}u^{(k)}=-\partial_t{\Delta\phi^{(k-1)}} -\sum_{i=1}^{k-1}\mathrm{div}(n^{(i)}u^{(k-i)})-u^{(0)} \cdot\nabla\Delta\phi^{(k-1)},\\
\partial_t{u^{(k)}}+{u^{(0)}}\cdot\nabla{u^{(k)}} +{u^{(k)}}\cdot\nabla{u^{(0)}}+\nabla T^{(k)}-\mu\Delta u^{(k)}-\nabla\phi^{(k)}\notag\\\ \ \ \ \ \ \ \ \ \ \ =-f_{k-1}, \ \ \ \ \ \ \ \ \\
\partial_t{T^{(k)}}+{u^{(k)}}\cdot\nabla{T^{(0)}} +{u^{(0)}}\cdot\nabla{T^{(k)}}-\frac{2\kappa}{3}\Delta T^{(k)}\notag\\\ \ \ \ \ \ \ \ \ \ \ -\frac {2\mu}3(\nabla u^{(0)}+(\nabla u^{(0)})^{\top})(\nabla u^{(k)}+(\nabla u^{(k)})^{\top})=-g_{k-1},\\
\Delta\phi^{(k-1)}=n^{(k)},
\end{numcases}
\end{subequations}
where $f_{k-1}$ and $g_{k-1}$ depends only on $(n^{(i)},u^{(i)},T^{(i)},\phi^{(i)})$ for any $0\leq i\leq k-1$ and $1\leq k\leq N$. For the solvability of $(n^{(k)},u^{(k)},T^{(k)},\phi^{(k)})$ for $1\leq k\leq N$, we have the following
\begin{theorem}\label{thm2}
Let $k\geq 1$ and $\tilde s_k\leq \tilde s-3k$ be sufficiently large integers. Then for any $0<\tau<\tau_{*}$ and any given initial data $(n^{(k)}_0,u^{(k)}_0,T^{(k)}_0,\phi^{(k)}_0)\in H^{\tilde s_k}$, the initial value problem \eqref{jk} has a unique solution $(n^{(k)},u^{(k)},T^{(k)},\phi^{(k)})\in L^{\infty}(0,\tau; H^{\tilde s_k}).$
\end{theorem}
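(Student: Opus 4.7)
The plan is to argue by induction on $k$, exploiting the fact that $(\mathcal S_k)$ is a \emph{linear} inhomogeneous system in the unknowns $(u^{(k)},T^{(k)},\phi^{(k)})$ once all lower-order profiles are known. Indeed, the fourth equation $\Delta\phi^{(k-1)}=n^{(k)}$ defines $n^{(k)}$ explicitly from the level-$(k-1)$ data; the right-hand side of the divergence equation and the forcings $f_{k-1}$, $g_{k-1}$ are polynomial expressions in $\{(n^{(i)},u^{(i)},T^{(i)},\phi^{(i)})\}_{i\le k-1}$ and their derivatives; and the coefficients of the differential operators depend only on $(u^{(0)},T^{(0)})$, supplied on $[0,\tau_*]$ by Theorem~\ref{thm1}. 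Thus the induction step reduces to a linear Cauchy problem with smooth coefficients and given data.

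To analyze this linear problem I introduce the effective pressure $\pi^{(k)}:=T^{(k)}-\phi^{(k)}$, so that the combination $\nabla T^{(k)}-\nabla\phi^{(k)}=\nabla\pi^{(k)}$ appears in the momentum equation as a pressure gradient. I then absorb the inhomogeneous divergence constraint by writing $u^{(k)}=v+\nabla\sigma$ with $\Delta\sigma=h_{k-1}$ (the prescribed divergence), solvable on $\mathbb{R}^3$ via the Newtonian potential. Applying the Leray projection $\mathbb P$ to the momentum equation eliminates $\pi^{(k)}$ and produces a closed linear parabolic system of the form
\begin{equation*}
\partial_t v-\mu\Delta v+\mathbb P\bigl(u^{(0)}\cdot\nabla v+v\cdot\nabla u^{(0)}\bigr)=\mathbb P\tilde F_{k-1},
\end{equation*}
coupled with a linear advection-diffusion equation for $T^{(k)}$ with drift $u^{(0)}$ and lower-order term in $v$. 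Standard $H^{\tilde s_k}$ energy estimates using Kato-Ponce-type commutator bounds (valid because $\tilde s_k$ is taken large enough relative to $N/2$) then close via Gronwall's inequality on $[0,\tau]$ with $\tau<\tau_*$; existence is obtained by Galerkin approximation or Friedrichs mollification. Once $(v,T^{(k)})$ is built, $\pi^{(k)}$ is recovered by inverting the Laplacian after taking the divergence of the momentum equation, one sets $\phi^{(k)}:=T^{(k)}-\pi^{(k)}$, and finally $n^{(k)}:=\Delta\phi^{(k-1)}$.

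The principal obstacle I anticipate is the \emph{derivative accounting} that fixes the constraint $\tilde s_k\le\tilde s-3k$. Indeed, the forcings $f_{k-1}$, $g_{k-1}$, and $h_{k-1}$ contain up to five spatial derivatives of $\phi^{(k-1)}$ and one time derivative (visible already in $f_0$ and $g_0$); using the level-$(k-1)$ equations to trade the time derivative for spatial derivatives produces a further loss. Tracking the worst product terms through Sobolev-algebra estimates shows that a loss of exactly three derivatives per inductive level suffices to guarantee $f_{k-1},g_{k-1},h_{k-1}\in L^\infty(0,\tau;H^{\tilde s_k})$, which is what the energy argument requires; this bookkeeping is what forces $\tilde s_k\le\tilde s-3k$ and makes the induction close.
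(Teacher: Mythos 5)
The paper in fact states Theorem~\ref{thm2} without supplying a proof; the system $(\mathcal S_k)$ is treated as a linearization of $(\mathcal S_0)$ around known profiles, and its solvability is regarded as classical (cf.\ the citation of \cite{Lions96} for Theorem~\ref{thm1}). So there is no proof in the paper to compare against; the most I can do is assess your argument on its own terms.

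Your strategy is the right one and, in outline, it is the standard way such linearized constrained systems are handled: note that $n^{(k)}$ is already determined by the previous level through $n^{(k)}=\Delta\phi^{(k-1)}$; treat $(\mathcal S_k)$ as a linear inhomogeneous system for $(u^{(k)},T^{(k)},\phi^{(k)})$ with coefficients depending only on $(u^{(0)},T^{(0)})$; absorb $\nabla T^{(k)}-\nabla\phi^{(k)}$ into a single pressure gradient $\nabla\pi^{(k)}$ so that the Leray projection removes it; split off a compressible corrector $\nabla\sigma$ carrying the prescribed divergence; close energy estimates on the divergence-free part $v$ and on $T^{(k)}$; and finally reconstruct $\pi^{(k)}$ elliptically and set $\phi^{(k)}=T^{(k)}-\pi^{(k)}$. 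None of these steps fails.

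Two technical points deserve sharpening before this becomes a complete proof. First, on $\mathbb R^3$ the equation $\Delta\sigma=h_{k-1}$ is not directly solvable ``via the Newtonian potential'' inside $H^{\tilde s_k}$ because of low-frequency behaviour; what rescues the construction is that $h_{k-1}$ is in divergence form. Indeed
$-\partial_t\Delta\phi^{(k-1)}=-\mathrm{div}(\partial_t\nabla\phi^{(k-1)})$, the middle sum is manifestly a divergence, and $-u^{(0)}\cdot\nabla\Delta\phi^{(k-1)}=-\mathrm{div}(u^{(0)}\Delta\phi^{(k-1)})$ since $\mathrm{div}\,u^{(0)}=0$. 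Hence $\nabla\sigma=\nabla\Delta^{-1}\mathrm{div}\,G$ for an explicit $G$, and this is a bounded combination of Riesz transforms on $H^{\tilde s_k}$; the same remark applies when you recover $\pi^{(k)}$ by taking the divergence of the momentum equation and inverting the Laplacian. Second, the derivative bookkeeping that should produce the bound $\tilde s_k\le\tilde s-3k$ is only sketched. It is plausible (the worst terms, such as $\nabla\Delta\Delta\phi^{(k-1)}$ and $\nabla\partial_t\Delta\phi^{(k-1)}$ after trading $\partial_t$ via the lower-level equations, do cost several derivatives), and the theorem's phrase ``sufficiently large integers'' gives you slack, but a complete argument must verify that the forcings $f_{k-1},g_{k-1}$ and the corrector data $\partial_t\nabla\sigma$ all land in the Sobolev space required by the $H^{\tilde s_k}$ energy estimate at each step of the induction. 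These are refinements rather than gaps: your proposal identifies the correct mechanism.
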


\subsection{Main result}\label{sect-rem}
To make the above formal derivation rigorous, we consider the following expansion
\begin{subequations}\label{expan}
\begin{numcases}{}
n=1+\varepsilon^{1} n^{(1)}+\varepsilon^{2} n^{(2)}+\varepsilon^{3} n^{(3)}+\varepsilon^{4} n^{(4)}+\cdots+\varepsilon^{N} n^{(N)}+\varepsilon^{N} N_{R},\\
u=u^{(0)}+\varepsilon^{1} u^{(1)}+\varepsilon^{2} u^{(2)}+\varepsilon^{3}u^{(3)}+\varepsilon^{4} u^{(4)}+\cdots+\varepsilon^{N} u^{(N)}+\varepsilon^{N} U_{R},\\
T=T^{(0)}+\varepsilon^{1} T^{(1)}+\varepsilon^{2} T^{(2)}+\varepsilon^{3}T^{(3)}+\varepsilon^{4} T^{(4)}+\cdots+\varepsilon^{N} T^{(N)}+\varepsilon^{N} T_{R},\\
\phi=\phi^{(0)}+\varepsilon^{1}\phi^{(1)}+\varepsilon^{2}\phi^{(2)} +\varepsilon^{3}\phi^{(3)}+\varepsilon^{4}\phi^{(4)}+\cdots+\varepsilon^{N}\phi^{N}+\varepsilon^{N}\phi_{R},
\end{numcases}
\end{subequations}
where $(u^{(0)},T^{(0)},\phi^{(0)})$ satisfies \eqref{e0}, $(n^{(k)},u^{(k)},T^{(k)},\phi^{(k)})$ satisfies \eqref{jk} for $1\leq k\leq N$ and $(N_R,U_R,T_R,\phi_R)$ is the remainder.

For convenience, we use the simplified notations
\begin{equation}\label{e1}
\begin{cases}
\tilde{n}=&n^{(1)}+\varepsilon n^{(2)}+\cdots+\varepsilon^{N-1}n^{(N)},\\
\tilde{u}=&u^{(1)}+\varepsilon u^{(2)}+\cdots+\varepsilon^{N-1}u^{(N)},\\
\tilde{T}=&T^{(1)}+\varepsilon T^{(2)}+\cdots+\varepsilon^{N-1}T^{(N)},\\
\tilde{\phi}=&\phi^{(1)}+\varepsilon \phi^{(2)}+\cdots+\varepsilon^{N-1}\phi^{(N)}.
\end{cases}
\end{equation}

Now the remainder system satisfied by $(N_R,U_R,T_R,\phi_R)$ can be obtained through careful computations, and we derive the following
\begin{subequations}\label{rem1}
\begin{numcases}{}
\partial_tN_{R}+u\cdot\nabla N_{R}+n\mathrm{div}U_{R}=-\varepsilon U_{R}\cdot\nabla \tilde{n}-\varepsilon\mathrm{div}\tilde{u}N_{R}-\varepsilon\Re_{1},\label{rem1-1}\\
\partial_t{U_{R}}+u\cdot\nabla U_{R}+\frac{T\nabla N_{R}}{n}-\frac{\hbar^2}{12}\frac{\nabla\Delta N_{R}}{n}-\varepsilon^{-\frac{1}{2}}\nabla\Phi_{R}-\mu\frac{\Delta U_{R}}{n}\notag\\\ \ \ \ \ \ -(\mu+\lambda)\frac{\nabla\mathrm{div}U_{R}}{n}=F,\label{rem1-2}\\
\partial_tT_{R}+u\cdot\nabla T_{R}-\frac{2\kappa}{3}\frac{\Delta T_{R}}{n}+\frac{\hbar^2}{36}\Big(\mathrm{div}\Delta U_{R}+\varepsilon^{N}\frac{\nabla N_{R}\cdot\Delta U_{R}}{n}\Big)=G,\label{rem1-3}\\
\varepsilon^{\frac{1}{2}}\Delta\Phi_{R}=-\varepsilon\Delta\phi^{(N)}+N_{R},\label{rem1-4}
\end{numcases}
\end{subequations}
where $\Phi_{R}=\sqrt{\varepsilon}\phi_{R}$ and in \eqref{rem1-2} and \eqref{rem1-3}, we denote
\begin{equation}
\begin{split}{}
F=&-{U_{R}}\cdot\nabla(u^{(0)}+\varepsilon\tilde{u})-\nabla T_{R}-\varepsilon\frac{\nabla\tilde{n}T_{R}}{n}-\frac{\hbar^2}{12}\Big\{\varepsilon\frac{\nabla N_{R}\Delta\tilde{n}+\nabla N_{R}\cdot\nabla^{2}\tilde{n}}{n^{2}}\notag\\&-\varepsilon^{2}\frac{2(\nabla\tilde{n}\cdot\nabla N_{R})\nabla\tilde{n}+(\nabla\tilde{n}\cdot\nabla\tilde{n})\nabla N_{R}}{n^{3}}+\varepsilon\frac{\nabla\tilde{n}\Delta N_{R}+\nabla\tilde{n}\cdot\nabla^{2}N_{R}}{n^{2}}\notag\\&-\varepsilon^{N+1}\frac{2(\nabla\tilde{n}\cdot\nabla N_{R})\nabla N_{R}+(\nabla N_{R}\cdot\nabla N_{R})\nabla\tilde{n}}{n^{3}}+\varepsilon^{N}\frac{\nabla N_{R}\Delta N_{R}}{n^{2}}\notag\\&+\varepsilon^{N}\frac{\nabla N_{R}\cdot\nabla^{2}N_{R}}{n^{2}}-\varepsilon^{2N}\frac{(\nabla N_{R}\cdot\nabla N_{R})\nabla N_{R}}{n^{3}}\Big\}-\varepsilon\Re_{21}+\frac{\varepsilon\Re_{22}+\Re_{23}}{n^{3}},
\end{split}
\end{equation}
with
\begin{equation}
\begin{split}{}
G=&-U_{R}\cdot(\nabla T^{(0)}+\varepsilon\nabla\tilde{T})-\frac23(\varepsilon T_{R}\mathrm{div}\tilde{u}+T\mathrm{div}U_{R})-\frac{\hbar^2}{36}\Big\{\frac{\nabla N_{R}\cdot\Delta u^{(0)}}{n}\notag\\&-\varepsilon\frac{\nabla N_{R}\cdot\Delta \tilde{u}}{n}+\varepsilon\frac{\nabla\tilde{n}\cdot\Delta U_{R}}{n}\Big\}+\frac{2\lambda}{3}\Big\{\frac{2\varepsilon\mathrm{div}\tilde{u}\mathrm{div}U_{R}}{n}+\frac{\varepsilon^{N}(\mathrm{div}U_{R})^{2}}{n}\Big\}\notag\\&+\frac{\mu}{3}\Big\{\frac{2(\nabla u^{(0)}+(\nabla u^{(0)})^{\top}+\varepsilon\nabla\tilde{u}+\varepsilon(\nabla\tilde{u})^{\top} )\cdot(\nabla U_{R}+(\nabla U_{R})^{\top})}{n}\notag\\&+\varepsilon^{N}\frac{|\nabla U_{R}+(\nabla U_{R})^{\top}|^{2}}{n}\Big\} -\varepsilon\Re_{31}+\frac{\varepsilon\Re_{32}+\Re_{33}}{n}.
\end{split}
\end{equation}
Moreover, $\Re_{ij}$ are given by
\begin{equation}\label{rem2}
\begin{cases}
\Re_{1}= \sum_{1\leq\alpha,\beta\leq N;\alpha+\beta\geq N+1}\varepsilon^{\alpha+\beta-N-1}\mathrm{div} (n^{(\alpha)}u^{(\beta)}),\\
\Re_{21}= \sum_{1\leq\alpha,\beta\leq N;\alpha+\beta\geq N+1}\varepsilon^{\alpha+\beta-N-1}u^{(\alpha)} \cdot\nabla u^{(\beta)},\\
\Re_{22}= n^{2}\{\text{finite combination of } \nabla n^{(\alpha)}\text{ and }T^{(\beta)}\}\\\ \ \ \ \ \ \ \ +\hbar^{2}\{\text {finite combination of }n^{(\alpha)}\text{ and }\ \text{their derivatives}\}\\\ \ \ \ \ \ \ \ -\hbar^{2}n\{\text {finite combination of } n^{(\alpha)}\text{ and their derivatives up to}\\\ \ \ \ \ \ \ \ \text{second order}\}+n^{2}\hbar^{2}\{\text {finite combination of }\ n^{(\alpha)}\text{ and their derivatives}\\\ \ \ \ \ \ \ \ \text{up to}\text{ third order}\}(1\leq\alpha\leq N,0\leq\beta\leq N),\\
\Re_{23}=[\varepsilon b^{2,1}-\mu b^{2,4}-(\mu+\lambda)\varepsilon b^{2,5}]n^{2}N_{R}+\hbar^{2}[\big(-\varepsilon b^{2,2}n^{2}+\varepsilon^{2}nb^{2,3,2}\\\ \ \ \ \ \ \ \ \  -\varepsilon^{3}b^{2,3,1})N_{R} +(\varepsilon^{N+2}nb^{2,3,2} -\varepsilon^{N+3}b^{2,3,1})N_{R}^{2} -\varepsilon^{2N+3}b^{2,3,1}N_{R}^{3}],\\
\Re_{31}=\sum_{1\leq\alpha,\beta\leq N;\alpha+\beta\geq N+1}\varepsilon^{\alpha+\beta-N-1}[u^{(\alpha)} \cdot\nabla T^{(\beta)}+\mathrm{div}u^{(\alpha)}T^{(\beta)}],\\
\Re_{32}=-\frac{2\kappa}3\{\text {finite combination of }n^{(\alpha)}\text{ and }\Delta T^{(\beta)}\}+\frac{\hbar^{2}}{36}\{\text {finite combination of}\\\ \ \ \ \ \ \ \ \ n^{(\alpha)} \text {and their derivatives and }\Delta u^{(\beta)}\}+(\mu+\lambda)\{\text {finite combination of }\\\ \ \ \ \ \ \ \ \ \mathrm{div}u^{(\beta)}{ and }\ \nabla u^{(\beta)}\}(1\leq\alpha,\beta\leq N),\\
\Re_{33}=[b^{3,1}+\hbar^{2}b^{3,2} +(\mu+\lambda)\varepsilon^{2}b^{3,4}+\mu b^{3,3}]N_{R},
\end{cases}
\end{equation}
where $b^{i,j}$ are $O(1)$ terms depending only on $(n^{(k)},u^{(k)},T^{(k)},\phi^{(k)})$ for $0\leq k\leq N$ with finite terms.
The main result of this paper is stated in the following
\begin{theorem}\label{thm3}
Let $\tilde{s},\tilde{s_{k}}\geq6$, $1\leq k\leq N$, in Theorem \ref{thm1} and Theorem \ref{thm2} be sufficiently large and $(u^{(0)},T^{(0)},\phi^{(0)})\in H^{\tilde{s}}$ be a solution constructed in Theorem \ref{thm1} with initial data $(u^{(0)}_{0},T^{(0)}_{0},\phi^{(0)}_{0})\in H^{\tilde{s}}$, where $u^{(0)}_0$ is divergence free. Let $(n^{(k)},u^{(k)},T^{(k)},\phi^{(k)})\in H^{\tilde s_k}$ be a solution constructed in Theorem \ref{thm2} with initial data $(n^{(k)}_{0},u^{(k)}_{0},T^{(k)}_{0},\phi^{(k)}_{0})\in H^{\tilde s_k}$. Let $$(N_{R},U_{R},T_{R},\phi_{R})(x,0) =(N_{R0},U_{R0},T_{R0},\phi_{R0})\in H^{5}$$ and assume the initial data $(n,u,T,\phi)(x,0)=(n_{0},u_{0},T_{0},\phi_{0})$ of \eqref{equ2} has the expansion
\begin{equation}
\begin{split}
(n_{0},u_{0},T_{0},\phi_{0}) = &(1,u^{(0)}_{0},T^{(0)}_{0},\phi^{(0)}_{0}) +\sum_{j=1}^{N}\varepsilon^{j}(n^{(j)}_{0}, u^{(j)}_{0},T^{(j)}_{0},\phi^{(j)}_{0}) \notag\\&+\varepsilon^{N}(N_{R0},U_{R0},T_{R0},\phi_{R0}).
\end{split}
\end{equation}
Then there exists $\varepsilon_{0}=\varepsilon_{0}(\tau)$ such that if $0<\varepsilon<\varepsilon_{0}$ there is a maximal time interval $[0,\tau_{\varepsilon}]$ with $\liminf_{\varepsilon\rightarrow 0}\tau_{\varepsilon}\geq\tau$ such that \eqref{equ2} has a unique solution $(n,u,T,\phi)$ satisfying
\begin{equation}
\begin{split}
&\big(n-(1+\varepsilon\tilde{n}), u-(u^{{(0)}}+\varepsilon\tilde{u}), T-(T^{{(0)}}+\varepsilon\tilde{T}), \nabla\phi-(\nabla\phi^{{(0)}}+\varepsilon\nabla\tilde{\phi})\big)\notag\\&\ \ \  \ \ \ \ \ \ \ \ \ \ \ \in C([0,\tau],H^{5}\times H^{4}\times H^{3}\times H^{4}).
\end{split}
\end{equation}
Moreover, there exists $C>0$ independent of $\varepsilon>0$ but may possibly depend on $\tau$, such that
$$\sup_{t\in[0,\tau]}\|\big(n-(1+\varepsilon\tilde{n}), u-(u^{{(0)}}+\varepsilon\tilde{u}), T-(T^{{(0)}}+\varepsilon\tilde{T})\big)\|_{H^{3}} \leq C\varepsilon^{N},$$ where $(\tilde n,\tilde u,\tilde T,\tilde \phi)$ is given in \eqref{e1}.
\end{theorem}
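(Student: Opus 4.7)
The plan is to prove Theorem \ref{thm3} by combining a local-in-time existence result for the remainder system \eqref{rem1} at each fixed $\varepsilon>0$ with a uniform-in-$\varepsilon$ \emph{a priori} estimate, and then invoking a continuation argument. Local existence on some interval $[0,\tau_\varepsilon]$ for each small $\varepsilon$ follows from general theory for quasilinear parabolic-hyperbolic-elliptic systems in the Matsumura-Nishida style. The substance of the theorem is to show $\liminf_{\varepsilon\to 0}\tau_\varepsilon\geq\tau$ for any $\tau<\tau_*$ together with the $\varepsilon^N$ convergence rate; both reduce to a uniform \emph{a priori} bound of the remainder in the triple norm \eqref{def-A}, which by design carries an $\hbar$-weighted higher derivative of $N_R$ and an $\varepsilon^{-1/2}$-weighted copy of $\nabla\Phi_R$.

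The core computation is an energy estimate obtained by differentiating \eqref{rem1-1}--\eqref{rem1-3} with $\partial^\alpha$ for $|\alpha|\leq 3$ and pairing against symmetrizing weights: \eqref{rem1-1} against $(T/n)\partial^\alpha N_R$, \eqref{rem1-2} against $n\,\partial^\alpha U_R$, and \eqref{rem1-3} against $(3n/2T)\partial^\alpha T_R$. Two singular structures have to be exposed with care. First, the term $-\varepsilon^{-1/2}\nabla\Phi_R$ in \eqref{rem1-2}, paired with $nU_R$ and integrated by parts, combines with $\partial_t N_R$ from the continuity equation via the Poisson relation $N_R=\varepsilon^{1/2}\Delta\Phi_R+\varepsilon\Delta\phi^{(N)}$ from \eqref{rem1-4} to produce $\tfrac12\tfrac{d}{dt}\|\nabla\Phi_R\|^2$ plus controllable error; this is the standard Poisson symmetrization used in quasineutral limits. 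Second, the quantum Bohm term $-\tfrac{\hbar^2}{12n}\nabla\Delta N_R$, after one integration by parts and substitution of $\mathrm{div}\,U_R$ from \eqref{rem1-1}, yields $\tfrac{d}{dt}\!\left(\tfrac{\hbar^2}{24}\!\int |\nabla N_R|^2/n\,dx\right)$ plus lower-order terms, giving positive control of $\hbar\|\nabla N_R\|$. Differentiating up to third order lifts this to control of $\hbar\|\nabla\partial^\alpha N_R\|$, which is exactly why the triple norm must incorporate $\hbar$-weighted derivatives of $N_R$. A parallel cancellation is needed between the third-order $\tfrac{\hbar^2}{36}\mathrm{div}\Delta U_R$ in \eqref{rem1-3} and the corresponding quantum stress in the momentum equation.

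Once the energy identity is set up, the forcings $F$, $G$ and the $\Re_{ij}$ in \eqref{rem2} split into an ``external'' part---proportional to $\Re_1,\Re_{21},\Re_{31}$ and combinations of smooth profiles---which is $O(\varepsilon)$ in the triple norm by Theorem \ref{thm1} and Theorem \ref{thm2}, and a genuinely nonlinear part depending on the remainders themselves, which is handled by Moser-type product estimates and absorbed on the left by smallness of the energy inside a continuity argument. Gronwall's inequality then yields a uniform bound of the triple norm on $[0,\tau]$ for any $\tau<\tau_*$ provided $\varepsilon$ is sufficiently small, which simultaneously extends the existence time and, since the initial triple norm is $O(1)$, delivers the estimate $\|(n-1-\varepsilon\tilde n,\,u-u^{(0)}-\varepsilon\tilde u,\,T-T^{(0)}-\varepsilon\tilde T)\|_{H^3}\leq C\varepsilon^N$ by reading off the $\varepsilon^N$ prefactor in \eqref{expan}.

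The main obstacle I anticipate is the compatibility between the singular parameters $\varepsilon$ and $\hbar$ at the top order of differentiation. The term $\hbar^2\nabla\Delta N_R/n$ consumes three derivatives of $N_R$, and commutators $[\partial^\alpha,1/n]$ with $n=1+\varepsilon\tilde n+\varepsilon^N N_R$ spill high-order derivatives of $N_R$ back into the energy inequality; these must be absorbed by the $\hbar$-weighted piece of the triple norm and not by the $\varepsilon^{-1/2}\nabla\Phi_R$ piece, which is reserved for the Poisson cancellation. Ensuring that no $\hbar^{-1}$ or $\varepsilon^{-1/2}$ factor leaks into the final Gronwall argument, while simultaneously producing the asymmetric regularity $(H^5,H^4,H^3,H^4)$ stated in the theorem (which reflects exactly how many derivatives each equation affords through its quantum and viscous structure), is the delicate combinatorial core of the argument.
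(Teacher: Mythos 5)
Your proposal correctly identifies the overall architecture (energy estimate on the remainder system in the $\hbar$-weighted triple norm, Poisson symmetrization for $\varepsilon^{-1/2}\nabla\Phi_R$, a priori bound plus Gronwall and continuity), but there is a structural gap in how you plan to close the estimate for the full triple norm, and one claimed mechanism that is not in fact available.

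The triple norm \eqref{def-A} has three tiers: $\|(N_R,U_R,T_R,\nabla\Phi_R)\|_{H^3}$, the $\hbar$-weighted block $\|(\hbar\nabla N_R,\hbar\nabla U_R,\hbar\mathrm{div}U_R,\hbar\Delta\Phi_R)\|_{H^3}$, and $\|\hbar^2\Delta N_R\|_{H^3}$. Your single energy estimate (pairing \eqref{rem1-2} against $n\partial^\alpha U_R$ and using the Bohm term as a time derivative of $\hbar^2\int|\partial^\alpha\nabla N_R|^2/n$) only produces the first tier plus the $\hbar\nabla N_R$ piece. It does not control $\hbar\nabla U_R$, $\hbar\mathrm{div}U_R$, $\hbar\Delta\Phi_R$, or $\hbar^2\Delta N_R$, and it does not produce the higher-order dissipation $\hbar^2\mu\|\Delta U_R\|_{H^3}^2$ and $\hbar^2(\mu+\lambda)\|\nabla\mathrm{div}U_R\|_{H^3}^2$. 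The paper's proof (Lemma~\ref{G4}) obtains all of these by running a \emph{second} energy identity: pair $\partial^\alpha$ of \eqref{rem1-2} against $-\hbar^2\partial^\alpha\Delta U_R$. This produces $\frac{d}{dt}\|\hbar\partial^\alpha\nabla U_R\|^2$ on the left, the quantum term $-\frac{\hbar^4}{12}\int\partial^\alpha(\frac{\nabla\Delta N_R}{n})\cdot\partial^\alpha\Delta U_R$ unwinds (via the continuity equation) into $-\frac{\hbar^4}{24}\frac{d}{dt}\int\frac{|\partial^\alpha\Delta N_R|^2}{n^2}$, and the electric term, via \eqref{rem1-1} and \eqref{rem1-4}, gives $-\frac{\hbar^2}{2}\frac{d}{dt}\int\frac{|\partial^\alpha\Delta\Phi_R|^2}{n}$. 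Without this second estimate the inequality from the first estimate cannot be closed, because the bad terms it generates live exactly in the missing tiers of the norm.

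Your claim that there is a ``parallel cancellation'' between $\frac{\hbar^2}{36}\mathrm{div}\Delta U_R$ in \eqref{rem1-3} and the quantum stress in \eqref{rem1-2} is not correct: no such cancellation occurs. In the paper this term ($R_{3,4}$ in Lemma~\ref{G3}) is handled by a single integration by parts and Young's inequality, giving $\kappa\delta\|\nabla T_R\|_{H^k}^2 + C\hbar^4\delta^{-1}\|\Delta U_R\|_{H^k}^2$. The $\|\nabla T_R\|^2$ piece is absorbed by the heat-conduction dissipation; the $\hbar^4\|\Delta U_R\|^2$ piece requires the $\hbar^2\mu\|\Delta U_R\|^2_{H^3}$ dissipation coming from the \emph{second} energy estimate (Lemma~\ref{G4}) to be absorbed, with the smallness $\hbar\ll1$ ensuring $\hbar^4\ll\hbar^2\mu$. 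So the temperature equation cannot be closed in isolation; it is coupled to the momentum equation through the dissipation produced by testing against $-\hbar^2\Delta U_R$, not through a cancellation. Your choice of symmetrizing weights $(T/n,\,n,\,3n/2T)$ is also a departure from the paper, which tests $\partial^\alpha$ of \eqref{rem1-1}--\eqref{rem1-3} directly against $\partial^\alpha N_R$, $\partial^\alpha U_R$, $\partial^\alpha T_R$ and handles the variable-coefficient prefactors by commutator estimates; your weights may work but would need a careful check that the cross terms between equations actually telescope given the $1/n$ factors already present in \eqref{rem1}. In short: the single weighted energy identity you propose cannot close, and the proposal needs to be supplemented by the $-\hbar^2\Delta U_R$ test and the associated dissipation structure.
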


In the next two sections, we will give some uniform estimates for $(N_R,U_R,T_R,\Phi_R)$. Finally in Section \ref{proof}, we prove Theorem \ref{thm3}. Throughout this paper, we use to $a\lesssim b$ to stand for $a\leq Cb$ for some constant $C>0$. Let $[f,g]=fg-gf$ to denote the commutator of $f$ and $g$.

\section{Basic estimates}
To give uniform estimates for \eqref{rem1}, we first introduce the norm
\begin{equation}\label{def-A}
\begin{split}
|\!|\!|(N_{R},U_R,T_{R},\nabla\Phi_{R})|\!|\!|_{3}^{2}=&\|(N_{R},U_R,T_{R},\nabla\Phi_{R})\|_{H^{3}}^{2}\\
&+\|(\hbar\nabla N_{R},\hbar\nabla U_R,\hbar\mathrm{div}U_R,\hbar\Delta\Phi_{R})\|_{H^{3}}^{2}+\|\hbar^{2}\Delta N_{R}\|_{H^{3}}^{2}.
\end{split}
\end{equation}
Let $\tilde C$ be a constant to be determined later, which is assumed to be independent of $\varepsilon$ and much larger than the bound $|\!|\!|(U_R,N_{R},T_{R},\nabla\Phi_R)(0)|\!|\!|_{3}^2$ of the initial data. Next, it is classical that there exists $\tau_{\varepsilon}>0$ such that on $[0,\tau_{\varepsilon}]$,
\begin{equation}\label{prior}
\begin{split}
|\!|\!|(N_{R},U_R,T_{R},\nabla\Phi_{R})|\!|\!|_{3}^{2}\leq \tilde C.
\end{split}
\end{equation}
Hence, we can assume that $n$ is bounded from above and below
\begin{equation}\label{f1}
\begin{split}
1/2<n<3/2,
\end{split}
\end{equation}
 when $\varepsilon$ is sufficiently small.

We will show that for any given $\tau>0$, there is some $\varepsilon_0>0$ such that the existence time $\tau_{\varepsilon}>\tau$ for any  $0<\varepsilon<\varepsilon_0$. To prove the theorem \ref{thm3}, we need to derive the uniform estimate for the remainder system \eqref{rem1}. To this end, we first give some estimates in Lemma \ref{L1}, Lemma \ref{L2} and Lemma \ref{L3} in this section.
\begin{lemma}\label{L1}
Let $0\leq k\leq 3$ be an integer, $(N_{R},U_{R},T_{R},\Phi_{R})$ be a solution to \eqref{rem1}, and $\alpha$ be a multi-index with $|\alpha|=k$, then we obtain
\begin{equation}
\begin{split}
\|\partial_tN_{R}\|_{L^{\infty}}^2\lesssim & 1+\|(U_{R},N_{R})\|_{H^{3}}^2,\\
\|\partial_tN_{R}\|_{H^{k}}^2\lesssim & (1+\varepsilon^{N}\|(N_{R},U_{R})\|_{H^{3}})\|(\mathrm{div}U_{R},\nabla N_{R},U_{R},N_{R})\|_{H^{3}}+\varepsilon,
\end{split}
\end{equation}
and
\begin{equation}
\begin{split}
\|\partial_t(\frac{1}{n})\|_{L^{\infty}}^2\lesssim & \varepsilon+\varepsilon^{N}\|(U_{R},N_{R})\|_{H^{3}}^2.
\end{split}
\end{equation}
\end{lemma}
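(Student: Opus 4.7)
\textbf{Proof plan for Lemma \ref{L1}.} The starting point is the mass equation \eqref{rem1-1}, which I would solve explicitly for the time derivative:
\begin{equation*}
\partial_t N_R = -u\cdot\nabla N_R - n\,\mathrm{div}\,U_R - \varepsilon U_R\cdot\nabla\tilde n - \varepsilon(\mathrm{div}\,\tilde u)\,N_R - \varepsilon\Re_1.
\end{equation*}
Substituting the expansion $n=1+\varepsilon\tilde n+\varepsilon^{N}N_R$ and $u=u^{(0)}+\varepsilon\tilde u+\varepsilon^{N}U_R$, and recalling that Theorems \ref{thm1}--\ref{thm2} give uniform-in-$\varepsilon$ bounds for $(u^{(0)},\tilde u,\tilde n,\ldots)$ in high Sobolev norms, and that $\Re_1$ is a finite sum of $O(1)$ (in $\varepsilon$) smooth profiles by \eqref{rem2}, every coefficient appearing in front of the unknowns can be controlled by $1+\varepsilon^{N}\|(N_R,U_R)\|_{H^3}$, while each purely ``profile'' term contributes $O(\varepsilon)$.

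For the first bound I would use the 3D Sobolev embedding $H^2\hookrightarrow L^\infty$ applied to $\nabla N_R$ and $\mathrm{div}\,U_R$, together with the a priori control \eqref{prior} (which also gives $\|u\|_{L^\infty},\|n\|_{L^\infty}\lesssim 1$). This reduces each term on the right-hand side to products of an $L^\infty$-bounded factor and a factor controlled by $\|(U_R,N_R)\|_{H^3}$; an application of Young's inequality then yields the claimed $1+\|(U_R,N_R)\|_{H^3}^{2}$ bound. For the second bound, I would apply $\partial^{\alpha}$ with $|\alpha|=k\leq 3$ and use the Moser/algebra estimate in $H^3(\mathbb{R}^3)$, namely $\|fg\|_{H^k}\lesssim \|f\|_{H^3}\|g\|_{H^k}$ for $k\leq 3$. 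The two dominant terms $u\cdot\nabla N_R$ and $n\,\mathrm{div}\,U_R$ are thus estimated by $(1+\varepsilon^{N}\|(U_R,N_R)\|_{H^3})\|(\nabla N_R,\mathrm{div}\,U_R)\|_{H^3}$; the remaining three terms are $O(\varepsilon)\|(U_R,N_R)\|_{H^3}+O(\varepsilon)$, which after squaring and using $2ab\le a^2+b^2$ produce the $+\varepsilon$ residue in the stated inequality.

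For the third estimate, I would simply write
\begin{equation*}
\partial_t\!\Big(\frac{1}{n}\Big) = -\frac{\partial_t n}{n^{2}} = -\frac{\varepsilon\,\partial_t\tilde n+\varepsilon^{N}\partial_t N_R}{n^{2}},
\end{equation*}
and use \eqref{f1} so that $1/n^2$ is harmlessly bounded. Since $\partial_t\tilde n$ is an $O(1)$ smooth profile (uniformly in $\varepsilon$, by Theorem \ref{thm2} applied to the systems $(\mathcal S_k)$), its contribution is $O(\varepsilon)$ in $L^\infty$. For the remainder part $\varepsilon^{N}\partial_t N_R$, I would invoke the first estimate just proved to obtain
\begin{equation*}
\varepsilon^{2N}\|\partial_t N_R\|_{L^\infty}^{2}\lesssim \varepsilon^{2N}\bigl(1+\|(U_R,N_R)\|_{H^3}^{2}\bigr)\lesssim \varepsilon+\varepsilon^{N}\|(U_R,N_R)\|_{H^3}^{2},
\end{equation*}
which is exactly the claimed inequality.

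The estimates themselves are essentially Moser-type calculus plus Sobolev embedding; the only mildly delicate point is bookkeeping the $\varepsilon$-powers so that all ``profile'' terms and cross terms are absorbed into either the $(1+\varepsilon^{N}\|\cdot\|_{H^3})\|\cdot\|_{H^3}$ factor or the $+\varepsilon$ residue, and so that in the third estimate one genuinely gets $\varepsilon^{N}\|\cdot\|_{H^3}^{2}$ rather than a full $\|\cdot\|_{H^3}^{2}$ term that would later obstruct the closure of the Gronwall argument. This accounting of powers of $\varepsilon$ is the main point to be careful about.
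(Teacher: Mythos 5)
Your proposal is correct and follows essentially the same route as the paper: both solve \eqref{rem1-1} for $\partial_t N_R$, invoke the uniform bound \eqref{f1} on $n$, and combine Sobolev embedding with the product and commutator estimates of the Appendix (Remark \ref{R1}, \eqref{r2}), then bound $\partial_t(1/n)=-\partial_t n/n^2$ via the expansion of $n$ and the already-established estimate for $\partial_t N_R$. The only cosmetic caveat is that the algebra estimate you cite, $\|fg\|_{H^k}\lesssim\|f\|_{H^3}\|g\|_{H^k}$, should really be the symmetric $\|fg\|_{H^k}\lesssim\|f\|_{H^3}\|g\|_{H^k}+\|f\|_{H^k}\|g\|_{H^3}$ (as in \eqref{r1}--\eqref{r2}), but since both $f$ and $g$ are ultimately controlled in $H^3$ this has no effect on the conclusion.
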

\begin{proof}
Notice that applying \eqref{rem1-1}, \eqref{f1}, and Sobolev embedding, together with \eqref{R1} and \eqref{r2} in the Appendix, we can obtain
\begin{equation*}
\begin{split}
\|\partial_tN_{R}\|_{L^{\infty}}
\lesssim &(1+\varepsilon^{N}\|U_{R}\|_{L^{\infty}})\|(N_{R},U_{R})\|_{H^{3}}+\varepsilon\lesssim 1+\|(U_{R},N_{R})\|_{H^{3}}^2,
\end{split}
\end{equation*}
and
\begin{equation*}
\begin{split}
\|\partial_tN_{R}\|_{H^{k}}
\lesssim &(1+\varepsilon^{N}\|(N_{R},U_{R})\|_{H^{3}})\|(\mathrm{div}U_{R},\nabla N_{R},U_{R},N_{R})\|_{H^{3}}+\varepsilon.
\end{split}
\end{equation*}
Moreover, it is easy to have
\begin{equation*}
\begin{split}
\|\partial_t\frac{1}{n}\|_{L^{\infty}}= &\|\frac{\partial_tn}{n^{2}}\|_{L^{\infty}} \lesssim \varepsilon+\varepsilon^{N}\|\partial_tN_{R}\|_{L^{\infty}} \lesssim\varepsilon+\varepsilon^{N}\|(U_{R},N_{R})\|_{H^{3}}^2.
\end{split}
\end{equation*}
\end{proof}
\begin{lemma}\label{L2}
Under the same conditions in Lemma \ref{L1}, we have the following estimate,
\begin{equation}
\begin{split}
\|\partial^{\alpha}(\frac{1}{n})\|_{L^\infty}\lesssim &1+\varepsilon^{|\alpha|N}\|N_{R}\|_{H^{2+|\alpha|}}^{|\alpha|},\\
\|\partial^{\alpha}(\frac{1}{n})\|_{L^2}\lesssim &1+\varepsilon^{|\alpha|N}\|N_{R}\|_{H^{|\alpha|}}^{|\alpha|},
\end{split}
\end{equation}
\begin{equation}
\begin{split}
\|\partial^{\alpha}(\frac{u}{n})\|_{L^\infty}\lesssim &1+\varepsilon^{(1+|\alpha|)N}\|(N_{R},U_{R})\|_{H^{2+|\alpha|}}^{1+|\alpha|},\\
\|\partial^{\alpha}(\frac{u}{n})\|_{L^2}\lesssim &1+\varepsilon^{(1+|\alpha|)N}\|(N_{R},U_{R})\|_{H^{|\alpha|}}^{1+|\alpha|},
\end{split}
\end{equation}
and
\begin{equation}
\begin{split}
\|\partial^{\alpha}(\frac{T}{n})\|_{L^\infty}\lesssim &1+\varepsilon^{(1+|\alpha|)N}\|(N_{R},T_{R})\|_{H^{2+|\alpha|}}^{1+|\alpha|},\\
\|\partial^{\alpha}(\frac{T}{n})\|_{L^2}\lesssim &1+\varepsilon^{(1+|\alpha|)N}\|(N_{R},T_{R})\|_{H^{|\alpha|}}^{1+|\alpha|}.
\end{split}
\end{equation}
\end{lemma}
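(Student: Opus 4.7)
The proof proposal is to treat $1/n$, $u/n$, and $T/n$ as smooth nonlinear functions of $n = 1 + \varepsilon\tilde n + \varepsilon^N N_R$ (and of $u$, $T$), which are well-defined because of the pointwise bound \eqref{f1}. The basic tool is the Faà di Bruno formula for $\partial^\alpha F(n)$ with $F(x)=1/x$: one obtains a finite sum of terms of the form $n^{-(|\pi|+1)}\prod_{B\in\pi}\partial^{\beta_B}n$ indexed by set partitions $\pi$ of the multi-index $\alpha$, with $\sum_B|\beta_B|=|\alpha|$ and $|\pi|\le|\alpha|$. Every factor $\partial^{\beta_B}n$ splits as $\varepsilon\,\partial^{\beta_B}\tilde n+\varepsilon^N\partial^{\beta_B}N_R$, and the uniform-in-$\varepsilon$ bounds on $(u^{(0)},T^{(0)},\phi^{(0)})$ and on $(n^{(k)},u^{(k)},T^{(k)},\phi^{(k)})$ from Theorems \ref{thm1} and \ref{thm2} make the $\tilde n$-part an $O(1)$ contribution in every $H^s$ that we need.

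For the $L^\infty$ estimate, I would bound each $\partial^{\beta_B}n$ in $L^\infty$ via the embedding $H^2\hookrightarrow L^\infty$, giving $\|\partial^{\beta_B}n\|_{L^\infty}\lesssim 1+\varepsilon^N\|N_R\|_{H^{|\beta_B|+2}}\le 1+\varepsilon^N\|N_R\|_{H^{|\alpha|+2}}$; combined with $1/2<n<3/2$ and multiplied over the $|\pi|\le|\alpha|$ factors in a single term, this yields $1+\varepsilon^{|\alpha|N}\|N_R\|_{H^{|\alpha|+2}}^{|\alpha|}$ after collecting the partition-expansion. For the $L^2$ estimate (read as $|\alpha|\ge 1$, so that $\partial^\alpha(1/n)$ actually decays at infinity and each term contains at least one genuine derivative of $n$), I would apply the standard Moser-type product inequality: put one factor with the highest-order derivative in $L^2$, the rest in $L^\infty$, which replaces one copy of $\|N_R\|_{H^{|\beta_B|+2}}$ by $\|N_R\|_{H^{|\alpha|}}$ and leaves the $L^\infty$ factors bounded. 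Carefully tracking partitions shows the $L^\infty$ factors can be absorbed into the uniform constants or into extra powers of $\|N_R\|_{H^{|\alpha|}}$, giving the stated $1+\varepsilon^{|\alpha|N}\|N_R\|_{H^{|\alpha|}}^{|\alpha|}$.

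The $u/n$ and $T/n$ bounds are then obtained by the Leibniz expansion
\[
\partial^\alpha\!\Bigl(\tfrac{u}{n}\Bigr)=\sum_{\beta\le\alpha}\binom{\alpha}{\beta}\,\partial^\beta u\,\partial^{\alpha-\beta}\!\Bigl(\tfrac{1}{n}\Bigr),
\]
(and similarly for $T/n$), combined with the identical splitting $u=u^{(0)}+\varepsilon\tilde u+\varepsilon^N U_R$. The factor of $u$ contributes one more power of $1+\varepsilon^N\|U_R\|_{H^{\cdot}}$, which explains the exponent $(1+|\alpha|)N$ in the statement; the dependence $H^{2+|\alpha|}$ versus $H^{|\alpha|}$ again comes from whether the $L^\infty$ or the Moser-Leibniz version is used on each factor, and the bound on $\partial^{\alpha-\beta}(1/n)$ already proved in the first step does the rest.

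The only mildly delicate step is the $L^2$ half: one has to avoid using $H^2\hookrightarrow L^\infty$ on the factor that is being estimated in $L^2$, otherwise the Sobolev index becomes $|\alpha|+2$ instead of $|\alpha|$. Once the Moser-type argument is set up so that exactly one derivative factor per term is taken in $L^2$ and all others in $L^\infty$, everything collapses to the clean form stated. No new structural input beyond \eqref{f1}, Sobolev embeddings on $\mathbb{R}^3$, and the uniform background bounds is required, so the proof is essentially algebraic bookkeeping of $\varepsilon$-powers with Faà di Bruno, Leibniz, and Moser as the only analytic ingredients.
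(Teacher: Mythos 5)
Your overall strategy—split $n=1+\varepsilon\tilde n+\varepsilon^N N_R$ (and likewise $u,T$), expand derivatives of $1/n$ as a sum over partitions, and close with H\"older plus Sobolev and the background bounds—is in substance the same as the paper's proof, which does the $|\alpha|=2$ case by hand and says the rest is similar; replacing the explicit hand computation by Fa\`a di Bruno is a cosmetic upgrade, not a different route.

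However, the $L^2$ half of your argument has a genuine gap. You propose to put one factor of $\partial^{\beta_B}N_R$ in $L^2$ and the others in $L^\infty$. But in $\mathbb{R}^3$ the embedding $H^2\hookrightarrow L^\infty$ costs two extra derivatives, so already for $|\alpha|=2$ and the partition into two singletons the term $\varepsilon^{2N}(\partial N_R)^2/n^3$ is bounded this way by $\varepsilon^{2N}\|\partial N_R\|_{L^2}\|\partial N_R\|_{L^\infty}\lesssim \varepsilon^{2N}\|N_R\|_{H^1}\|N_R\|_{H^3}$, which is \emph{not} controlled by $1+\varepsilon^{2N}\|N_R\|_{H^2}^2$: interpolation gives the inequality in the opposite direction, $\|N_R\|_{H^2}^2\lesssim\|N_R\|_{H^1}\|N_R\|_{H^3}$. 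Your remark that the $L^\infty$ factors ``can be absorbed into the uniform constants or into extra powers of $\|N_R\|_{H^{|\alpha|}}$'' is therefore not justified, and this is precisely the ``delicate step'' you flag but do not actually resolve. The paper avoids it by not using $L^\infty$ on those factors at all: it distributes the product over $L^3\cdot L^6$ and invokes $H^1(\mathbb{R}^3)\hookrightarrow L^3\cap L^6$ (equivalently, Gagliardo--Nirenberg), so that each factor costs only one extra derivative and the total lands in $H^{|\alpha|}$. If you replace the $L^2\cdot L^\infty$ pairing by the $L^3\cdot L^6$ (or more generally a balanced $L^{p}\cdot L^{q}$ with $1/p+1/q=1/2$) pairing on the lower-order blocks, your argument goes through and matches the paper; the $L^\infty$ estimate and the Leibniz reduction of $u/n$, $T/n$ to the $1/n$ case are fine as you describe them.
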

\begin{proof}
For $|\alpha|=1$, it is easy to prove, so we take $|\alpha|=2$ for example. By computation
\begin{equation*}
\begin{split}
\partial^{2}(\frac{1}{n})= &\frac{2(\varepsilon\partial\tilde{n}+\varepsilon^{N}\partial N_{R})^{2}}{n^{3}}-\frac{\varepsilon\partial^{2}\tilde{n}+\varepsilon^{N}\partial^{2} N_{R}}{n^{2}},
\end{split}
\end{equation*}
then taking $L^{2}$-norm, using H\"older's inequality and Sobolev embedding $H^{1}\hookrightarrow L^{3},L^{6}$, we can obtain
\begin{equation*}
\begin{split}
\|\partial^{2}(\frac{1}{n})\|_{L^{2}}\lesssim &4\varepsilon^{N+1}\|\partial\tilde{n}\partial N_{R}\|_{L^{2}}+2\varepsilon^{2}\|(\partial\tilde{n})^{2}\|_{L^{2}}+2\varepsilon^{2N}\|(\partial N_{R})^{2}\|_{L^{2}}\\&+\varepsilon+\varepsilon^{N}\|\partial^{2}N_{R}\|_{L^{2}}
\lesssim 1+\varepsilon^{2N}\|N_{R}\|_{H^{2}}^{2}.
\end{split}
\end{equation*}
Similarly, taking $L^{\infty}$-norm yields
\begin{equation*}
\begin{split}
\|\partial^{2}(\frac{1}{n})\|_{L^{\infty}}\lesssim &1+\varepsilon^{2N}\|N_{R}\|_{H^{4}}^{2}.
\end{split}
\end{equation*}
Similarly, by computation, we have
\begin{equation*}
\begin{split}
\partial^{2}(\frac{u}{n})=&\partial^{2}(\frac{1}{n})u+2\partial(\frac{1}{n})\partial u+\frac{\partial^{2}u}{n}.
\end{split}
\end{equation*}
Due to Sobolev embedding $H^{2}\hookrightarrow L^{\infty}$, Young's inequality and the above results, we can verify
\begin{equation*}
\begin{split}
\|\partial^{2}(\frac{u}{n})\|_{L^{\infty}}\lesssim &(1+\varepsilon^{2N}\|N_{R}\|_{H^{4}}^{2})(1+\varepsilon^{N}\|U_{R}\|_{H^{2}})++(1+\varepsilon^{N}\|U_{R}\|_{H^{4}})\notag\\&
(1+\varepsilon^{N}\|N_{R}\|_{H^{3}})(1+\varepsilon^{N}\|U_{R}\|_{H^{3}})\lesssim 1+\varepsilon^{3N}\|(N_{R},U_{R})\|_{H^{4}}^{3}.
\end{split}
\end{equation*}
Again taking $L^{2}$-norm, using H\"older's inequality and Sobolev embedding $H^{1}\hookrightarrow L^{3},L^{6}$, we know
\begin{equation*}
\begin{split}
\|\partial^{2}(\frac{u}{n})\|_{L^{2}}\lesssim &\|u\|_{H^{2}}\|\frac 1n\|_{H^{2}}
\lesssim 1+\varepsilon^{3N}\|(N_{R},U_{R})\|_{H^{2}}^{3}.
\end{split}
\end{equation*}
The case of $|\alpha|\neq2$ can be proved similarly, thus we omit them.
\end{proof}
\begin{lemma}\label{L3}
Let $\alpha$ be a multi-index with $0\leq|\alpha|\leq k$ and $k$ be an integer, $f\in \mathbb{S},$ the Schwartz class, then we obtain a prior bound
\begin{equation}
\begin{split}
\|\nabla^{2}f\|_{H^{k}}\lesssim &\|\Delta f\|_{H^{k}}.
\end{split}
\end{equation}
\end{lemma}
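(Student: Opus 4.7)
My plan is to prove this estimate by the most direct route: integration by parts in physical space (or equivalently Plancherel on the Fourier side), exploiting the fact that $f\in\mathbb{S}$ kills all boundary contributions at infinity. I would first handle the base case $k=0$ and then bootstrap to general $k$ by commuting derivatives, since $\partial^\beta$ commutes with both $\nabla^2$ and $\Delta$.

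For the base case, the key pointwise identity for Schwartz $f$ is that, for each pair of indices $i,j\in\{1,2,3\}$,
$$\int(\partial_i\partial_j f)^2\,dx \;=\; -\int(\partial_i^2\partial_j f)(\partial_j f)\,dx \;=\; \int(\partial_i^2 f)(\partial_j^2 f)\,dx,$$
obtained by integrating by parts twice and discarding boundary terms. Summing over $i,j$ collapses the right-hand side to $(\Delta f)^2$ and the left-hand side to $|\nabla^2 f|^2$, so
$$\|\nabla^2 f\|_{L^2}^2 \;=\; \sum_{i,j}\int(\partial_i^2 f)(\partial_j^2 f)\,dx \;=\; \int(\Delta f)^2\,dx \;=\; \|\Delta f\|_{L^2}^2,$$
which is in fact an equality with constant $1$, so the claim $\|\nabla^2 f\|_{L^2}\lesssim\|\Delta f\|_{L^2}$ follows trivially. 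The same conclusion drops out of Plancherel from the identity $\sum_{i,j}|\xi_i\xi_j|^2=|\xi|^4$.

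For $k\geq 1$, $\partial^\beta f$ is itself Schwartz for every multi-index $\beta$, so applying the base identity to $\partial^\beta f$ yields
$$\|\nabla^2\partial^\beta f\|_{L^2} \;=\; \|\Delta\partial^\beta f\|_{L^2} \;=\; \|\partial^\beta\Delta f\|_{L^2}.$$
Squaring and summing over $|\beta|\leq k$ reassembles the identity $\|\nabla^2 f\|_{H^k}^2 = \|\Delta f\|_{H^k}^2$ and finishes the proof.

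There is no substantive obstacle here; the estimate is elementary and in fact holds with constant $1$ on $\mathbb{R}^3$. The only mild care needed is to align the derivative counts on the two sides: both $\|\nabla^2 f\|_{H^k}$ and $\|\Delta f\|_{H^k}$ involve $L^2$ norms of derivatives of $f$ of orders $2,3,\ldots,k+2$, so the identities above match order by order.
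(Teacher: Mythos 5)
Your proof is correct, and it takes a genuinely different route from the paper's. The paper writes the Fourier multiplier of $\partial_i\partial_j$ as $-\tfrac{\xi_i\xi_j}{|\xi|^2}\cdot(-|\xi|^2)$ and invokes the $L^p$-boundedness of the double Riesz transform $R_iR_j$ (citing Stein) to conclude $\|\partial^\alpha\partial_i\partial_jf\|_{L^2}=\|\partial^\alpha R_iR_j(-\Delta)f\|_{L^2}\lesssim\|\partial^\alpha\Delta f\|_{L^2}$. You instead integrate by parts twice in physical space to get the pointwise-summed identity $\sum_{i,j}\int(\partial_i\partial_jf)^2=\int(\Delta f)^2$, which yields the sharper statement $\|\nabla^2f\|_{H^k}=\|\Delta f\|_{H^k}$ with constant exactly $1$. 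Both proofs pass through the same Fourier-side algebra $\sum_{i,j}\xi_i^2\xi_j^2=|\xi|^4$, which you note; the difference is in the machinery invoked. Your argument is more elementary (no singular integral theory) and gives an exact equality rather than an inequality, but it is tied to the $L^2$ structure; the paper's Riesz-transform argument generalizes verbatim to $W^{k,p}$ for any $1<p<\infty$, though for this lemma only the $L^2$ case is needed, so the extra generality is not used. Either proof is adequate for the purposes of the paper.
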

\begin{proof}
Thanks to Plancherel's theorem, we derive
\begin{equation}
\begin{split}
\|\partial^{\alpha}\partial_{i}\partial_{j}f\|_{L^{2}}=&\|(2\pi i\xi)^{\alpha}(-4\pi^{2}\xi_{i}\xi_{j})\hat{f}\|_{L^{2}}=\|(2\pi i\xi)^{\alpha}(-\frac{i\xi_{i}i\xi_{j}}{|\xi|^{2}})(-4\pi^{2}|\xi|^{2})\hat{f}\|_{L^{2}}\\
= &\|\partial^{\alpha}R_{i}R_{j}(-\Delta)f\|_{L^{2}},
\end{split}
\end{equation}
where we use the Riesz operator $R_{j},$ $\widehat{(R_{j}f)}=\frac{i\xi_{j}}{|\xi|}\hat{f},$ and the conclusion in \cite{Stein1970} that $R_{i}R_{j}$ is bounded from $L^{p}$ to $L^{p}$ with $1< p<\infty.$ Then summing all multi-index $\alpha$ with $0\leq|\alpha|\leq k$, we complete the proof.
\end{proof}

\section{Uniform energy estimates}
In this section, we will prove the following
\begin{proposition}\label{prop1}
Let $(N_{R},U_{R},T_{R},\Phi_{R})$ be a solution to \eqref{rem1}, then there holds
\begin{equation}\label{prop}
\begin{split}
&|\!|\!|(N_{R},U_R,T_{R},\nabla\Phi_{R})|\!|\!|_{3}^{2}+\int_{0}^{t}\Big\{\kappa\|\nabla T_{R}\|_{H^{3}}^2+\mu\|\nabla U_{R}\|_{H^{3}}^2+(\mu+\lambda)\|\mathrm{div}U_{R}\|_{H^{3}}^2\\&\ \ \ \ +\hbar^{2}\mu\|\Delta U_{R}\|_{H^{3}}^2 +\hbar^{2}(\mu+\lambda)\|\nabla\mathrm{div}U_{R}\|_{H^{3}}^2\Big\}\\&\lesssim |\!|\!|(N_{R},U_R,T_{R},\nabla\Phi_{R})(0)|\!|\!|_{3}^{2}\\&\ \ \ \ + \int_{0}^{t}\Big\{(1+\varepsilon^{5N}\|(N_{R},U_{R},T_{R})\|_{H^{3}}^{10})|\!|\!|(N_{R},U_R,T_{R},\nabla\Phi_{R})|\!|\!|_{3}^{2}+\varepsilon\Big\},
\end{split}
\end{equation}
where $\delta= \min\{\frac 1{32},\frac 1{32\mu},\frac 1{32(\mu+\lambda)}\}$ and $\hbar\ll 1$.
\end{proposition}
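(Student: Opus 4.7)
The plan is to carry out weighted $H^3$ energy estimates on the remainder system \eqref{rem1}, tracking carefully the contributions of the Poisson coupling and the Bohm quantum term so that the dissipative structure is preserved uniformly in $\varepsilon$ and $\hbar$. For each multi-index $\alpha$ with $|\alpha|\le 3$, I would apply $\partial^\alpha$ to \eqref{rem1-1}--\eqref{rem1-3} and test against the symmetrizing multipliers $(T/n)\partial^\alpha N_R$, $n\,\partial^\alpha U_R$, and $(3n/2T)\partial^\alpha T_R$ respectively. The advective terms $u\cdot\nabla$ integrate by parts to yield only $\mathrm{div}\,u$ contributions of size $O(\varepsilon)+O(\varepsilon^N\|\cdot\|)$, while the two pressure-type contributions $n\,\mathrm{div}\,U_R$ (in mass) and $(T/n)\nabla N_R$ (in momentum) combine to $T\,\mathrm{div}(N_R U_R)$, whose integral depends only on $\nabla T$ and is a lower-order perturbation controlled by Lemma \ref{L2}. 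The commutators $[\partial^\alpha,T/n]\nabla N_R$ and $[\partial^\alpha,n]\mathrm{div}\,U_R$ are handled by Moser-type product rules together with Lemma \ref{L2}.

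The singular electrostatic coupling is resolved through \eqref{rem1-4}: writing $N_R=\varepsilon^{1/2}\Delta\Phi_R+\varepsilon\Delta\phi^{(N)}$, the apparently singular $-\varepsilon^{-1/2}\nabla\Phi_R$ in \eqref{rem1-2}, paired with $n\,\partial^\alpha U_R$, becomes (after integrating by parts and substituting $\mathrm{div}\,U_R$ via \eqref{rem1-1}) a full time derivative $\tfrac12\tfrac{d}{dt}\|\nabla\Phi_R\|_{H^3}^2$ modulo $O(\varepsilon)$ terms. Likewise, the Bohm term $-\tfrac{\hbar^2}{12n}\nabla\Delta N_R$ paired with $n\,\partial^\alpha U_R$ and integrated by parts twice gives $\tfrac{\hbar^2}{12}\int \partial^\alpha\Delta N_R\,\partial^\alpha\mathrm{div}\,U_R$, which when combined with $\partial^\alpha\eqref{rem1-1}$ tested against $\tfrac{\hbar^2}{12}\partial^\alpha\Delta N_R$ produces $\tfrac{d}{dt}\tfrac{\hbar^2}{24}\|\nabla N_R\|_{H^3}^2$. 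A parallel manipulation of the $\hbar^2/36$ contribution in \eqref{rem1-3} yields the corresponding $\hbar^2$-weighted entry for $\nabla U_R$. The classical integration by parts on viscous and heat terms produces the dissipations $\mu\|\nabla U_R\|_{H^3}^2$, $(\mu+\lambda)\|\mathrm{div}\,U_R\|_{H^3}^2$ and $\kappa\|\nabla T_R\|_{H^3}^2$ on the left-hand side.

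The higher-order $\hbar$-weighted norms in \eqref{def-A} are obtained by repeating the above procedure after premultiplying \eqref{rem1} by $\hbar\nabla$ and, for the term $\|\hbar^2\Delta N_R\|_{H^3}$, by $\hbar^2\Delta$; Lemma \ref{L3} is used throughout to trade $\nabla^2$ for $\Delta$ so that only $\Delta U_R$ (and $\nabla\mathrm{div}\,U_R$) dissipations are needed. The main obstacle is closing the estimate for $\hbar^2\Delta N_R$: applying $\partial^\alpha\hbar^2\Delta$ directly to \eqref{rem1-1} loses a derivative on $n\,\mathrm{div}\,U_R$, so instead I would take $\partial^\alpha\hbar\,\mathrm{div}$ of \eqref{rem1-2}, pair with $\hbar\partial^\alpha\Delta N_R$, and exploit the Bohm term $\tfrac{\hbar^2}{12}\mathrm{div}(\nabla\Delta N_R/n)$ to produce, after one integration by parts, the coercive contribution $\tfrac{\hbar^4}{12}\|\partial^\alpha\Delta N_R/\sqrt{n}\|^2$; the Poisson coupling is absorbed by using $\partial_t\Delta\Phi_R$ via \eqref{rem1-4}, and combining the result with the basic estimate yields the missing $\hbar^2\Delta N_R$ control.

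It remains to estimate the nonlinear right-hand sides $F$, $G$ and the source terms $\Re_{ij}$ in \eqref{rem2}. Each multilinear term in $(N_R,U_R,T_R)$ is bounded through Hölder's inequality, the Sobolev embedding $H^2\hookrightarrow L^\infty$, and the product rules of Lemmas \ref{L1}--\ref{L2}; factors of $1/n$ and their derivatives are handled by the explicit bounds in Lemma \ref{L2}. Young's inequality with small constant $\delta$ absorbs the gradient-heavy contributions into the dissipation on the left-hand side, yielding the upper bound $(1+\varepsilon^{5N}\|(N_R,U_R,T_R)\|_{H^3}^{10})|\!|\!|(N_R,U_R,T_R,\nabla\Phi_R)|\!|\!|_3^2$ for the polynomial part, while the explicit $\varepsilon$-prefactors in $\Re_{ij}$ give the pure $\varepsilon$ source term. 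Integration in time from $0$ to $t$ then produces \eqref{prop}. The subtle point is ensuring that every appearance of $\hbar^2\nabla\Delta N_R$ is either paired with a $\mathrm{div}\,U_R$-type factor or absorbed by the coercive $\hbar^4\|\nabla\Delta N_R\|^2$ term obtained above; any misplaced integration by parts would leave an uncompensated top-order contribution and break uniformity in $\hbar$.
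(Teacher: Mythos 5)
Your overall energy-estimate architecture (basic $H^3$ estimates on each equation, integration by parts on the advective, viscous and heat terms, absorption of the singular Poisson coupling via \eqref{rem1-4}, then $\hbar$-weighted estimates and a final Gronwall argument) is the right general framework, and your use of symmetrizing multipliers $(T/n)\partial^{\alpha}N_{R},\,n\partial^{\alpha}U_{R},\,(3n/2T)\partial^{\alpha}T_{R}$ is a legitimate alternative to the paper's unweighted multipliers; since $n=1+O(\varepsilon)$ and $T=T^{(0)}+O(\varepsilon)$ the discrepancies it introduces are harmless lower-order terms. Your derivation of the $\frac{d}{dt}\|\hbar\nabla N_{R}\|^{2}$ entry, by integrating the Bohm term by parts against $U_R$ and pairing with the continuity equation tested against $\Delta N_{R}$, is also sound and parallels the paper's $R_{2,4}$ computation.

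However, your route to the $\|\hbar^{2}\Delta N_{R}\|_{H^{3}}^{2}$ entry has a genuine gap. First, the computation is wrong: taking $\partial^{\alpha}\hbar\,\mathrm{div}$ of \eqref{rem1-2}, pairing with $\hbar\partial^{\alpha}\Delta N_{R}$, and integrating the Bohm contribution by parts once gives
\begin{equation*}
-\frac{\hbar^{4}}{12}\int\partial^{\alpha}\mathrm{div}\Big(\frac{\nabla\Delta N_{R}}{n}\Big)\partial^{\alpha}\Delta N_{R}
=\frac{\hbar^{4}}{12}\int\frac{|\partial^{\alpha}\nabla\Delta N_{R}|^{2}}{n}+\cdots,
\end{equation*}
i.e. a coercive quantity at the level of $\nabla\Delta N_{R}$, not $\Delta N_{R}$ as you claim; this is one derivative too high and does not appear anywhere in the triple norm \eqref{def-A}. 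Second, even if the sign and derivative count were as you state, such a term is a time-integrated dissipation, whereas the triple norm requires the \emph{energy} term $\frac{d}{dt}\|\hbar^{2}\Delta N_{R}\|_{H^{3}}^{2}$. Third, this multiplier pairing leaves the contribution $\hbar^{2}\int\partial^{\alpha}\partial_{t}\mathrm{div}U_{R}\,\partial^{\alpha}\Delta N_{R}$ from the $\partial_{t}U_{R}$ term with no exact time-derivative structure, so nothing propagates $\|\hbar^{2}\Delta N_{R}\|_{H^{3}}$ in time. The mechanism that actually produces this entry (Lemma \ref{G4}, term $R_{4,4}$ in the paper) is to test $\partial^{\alpha}\eqref{rem1-2}$ against $-\hbar^{2}\partial^{\alpha}\Delta U_{R}$: then $\partial_{t}U_{R}$ directly yields $\frac{\hbar^{2}}{2}\frac{d}{dt}\|\partial^{\alpha}\nabla U_{R}\|^{2}$, the Bohm term after integration by parts becomes $\frac{\hbar^{4}}{12}\int n^{-1}\partial^{\alpha}\Delta N_{R}\,\partial^{\alpha}\Delta\mathrm{div}U_{R}$, and only after substituting $n\,\mathrm{div}U_{R}=-\partial_{t}N_{R}-u\cdot\nabla N_{R}-\varepsilon(\cdots)$ from \eqref{rem1-1} does one obtain $-\frac{\hbar^{4}}{24}\frac{d}{dt}\int n^{-2}|\partial^{\alpha}\Delta N_{R}|^{2}$; the Poisson coupling likewise yields $-\frac{\hbar^{2}}{2}\frac{d}{dt}\int n^{-1}|\partial^{\alpha}\Delta\Phi_{R}|^{2}$, and the viscous terms furnish the $\hbar^{2}$-weighted dissipations on the left of \eqref{prop}. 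You should replace your divergence-of-momentum step with this $-\hbar^{2}\Delta U_{R}$ multiplier; as written, the $\hbar^{2}\Delta N_{R}$ estimate does not close.
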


This proposition is proved as a direct sequence of the following Lemmas. Moreover, combining Proposition \ref{prop1} with the standard continuous induction method, we can obtain for any given $\tau>0$, there is some $\varepsilon_0>0$ such that the existence time $\tau_{\varepsilon}>\tau$ for any  $0<\varepsilon<\varepsilon_0$.
\begin{lemma}\label{G1}
Let $0\leq k\leq 3$ be an integer, $(N_{R},U_{R},T_{R},\Phi_{R})$ be a solution to \eqref{rem1}, and $\alpha$ be a multi-index with $|\alpha|=k$, then we can have
\begin{equation}\label{g1}
\begin{split}
\frac{d}{dt}\|N_{R}\|_{H^{\alpha}}^2\lesssim &\frac{\mu+\lambda}{32}\|\mathrm{div}U_{R}\|_{H^{k}}^2+(1+\varepsilon^{N}\|(N_{R},U_{R})\|_{H^{3}})\|(N_{R},U_{R})\|_{H^{3}}^2+\varepsilon.
\end{split}
\end{equation}
\end{lemma}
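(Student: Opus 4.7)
The plan is to apply $\partial^{\alpha}$ with $|\alpha|=k$ to the continuity-type equation \eqref{rem1-1}, take the $L^{2}$ inner product with $\partial^{\alpha} N_{R}$, and handle each resulting term either by integration by parts (for the transport part), by the small-constant Young inequality (for the term coupling $N_{R}$ with $\operatorname{div} U_{R}$), or by the estimates of Lemmas \ref{L1}--\ref{L2} together with the bound on $\Re_{1}$ given in \eqref{rem2}. First I would rewrite
\[
\partial_{t}\partial^{\alpha}N_{R}+u\cdot\nabla\partial^{\alpha}N_{R}+n\,\partial^{\alpha}\!\operatorname{div} U_{R}
= -[\partial^{\alpha},u\cdot\nabla]N_{R}-[\partial^{\alpha},n]\operatorname{div} U_{R}-\varepsilon\,\partial^{\alpha}\!\bigl(U_{R}\!\cdot\!\nabla\tilde n+\operatorname{div}\tilde u\,N_{R}+\Re_{1}\bigr),
\]
and then pair with $\partial^{\alpha}N_{R}$ in $L^{2}$.

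The transport term integrates by parts to $-\tfrac12\int (\operatorname{div} u)|\partial^{\alpha}N_{R}|^{2}$; since $\operatorname{div} u=\varepsilon\operatorname{div}\tilde u+\varepsilon^{N}\operatorname{div} U_{R}$ one bounds $\|\operatorname{div} u\|_{L^{\infty}}\lesssim 1+\varepsilon^{N}\|U_{R}\|_{H^{3}}$ using the Sobolev embedding $H^{2}\hookrightarrow L^{\infty}$ and the uniform smoothness of the profile terms $n^{(i)},u^{(i)}$ from Theorems \ref{thm1}--\ref{thm2}. This contributes $(1+\varepsilon^{N}\|U_{R}\|_{H^{3}})\|N_{R}\|_{H^{k}}^{2}$. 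The two commutators are estimated by the standard Moser/Kato--Ponce inequality: $\|[\partial^{\alpha},u\cdot\nabla]N_{R}\|_{L^{2}}\lesssim \|\nabla u\|_{L^{\infty}}\|N_{R}\|_{H^{k}}+\|\nabla N_{R}\|_{L^{\infty}}\|u\|_{H^{k}}$ and similarly for $[\partial^{\alpha},n]\operatorname{div} U_{R}$, invoking Lemma \ref{L2} (applied to $u$ and $n$) to bound the coefficient norms by $1+\varepsilon^{kN}\|(N_{R},U_{R})\|_{H^{k+2}}^{k}$; after Cauchy--Schwarz these contribute $(1+\varepsilon^{N}\|(N_{R},U_{R})\|_{H^{3}})\|(N_{R},U_{R})\|_{H^{3}}^{2}$.

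The key step is the cross term $\int n\,\partial^{\alpha}\!\operatorname{div} U_{R}\cdot\partial^{\alpha}N_{R}\,dx$. Unlike the transport piece, there is no structural cancellation here because no dissipation on $N_{R}$ is available in \eqref{rem1-1}; so I would apply $\|n\|_{L^{\infty}}\le 3/2$ from \eqref{f1} and Young's inequality with weight chosen precisely to generate the prefactor $(\mu+\lambda)/32$:
\[
\Big|\int n\,\partial^{\alpha}\!\operatorname{div} U_{R}\cdot\partial^{\alpha}N_{R}\,dx\Big|
\le \tfrac{\mu+\lambda}{32}\|\operatorname{div} U_{R}\|_{H^{k}}^{2}+C\|N_{R}\|_{H^{k}}^{2}.
\]
This is precisely the term that will later be absorbed by the viscous dissipation $(\mu+\lambda)\|\operatorname{div} U_{R}\|_{H^{3}}^{2}$ appearing on the left-hand side of \eqref{prop} once $U_{R}$ is estimated and all the component lemmas are combined, and this is what makes the choice of constant $\delta$ in Proposition \ref{prop1} natural.

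Finally, for the $\varepsilon$-order forcing, since $\tilde n,\tilde u$ are smooth profiles bounded uniformly in $\varepsilon$ and $\Re_{1}$ is a finite sum of terms of the form $\varepsilon^{\alpha+\beta-N-1}\operatorname{div}(n^{(\alpha)}u^{(\beta)})$ with $\alpha+\beta\ge N+1$, one obtains
\[
\varepsilon\bigl|\!\!\int\partial^{\alpha}(U_{R}\cdot\nabla\tilde n+\operatorname{div}\tilde u\,N_{R}+\Re_{1})\cdot\partial^{\alpha}N_{R}\bigr|\lesssim \varepsilon\|(N_{R},U_{R})\|_{H^{k}}^{2}+\varepsilon,
\]
using Cauchy--Schwarz and Young. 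Summing the four contributions yields \eqref{g1}. The main obstacle, as indicated, is the $n\,\operatorname{div} U_{R}$ coupling: there is no $N_{R}$-dissipation to absorb it, so the proof must defer it to the joint energy argument of Proposition \ref{prop1}, which explains the presence of the $(\mu+\lambda)/32$ term on the right side of the lemma.
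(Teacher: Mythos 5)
Your proposal is correct and follows essentially the same route as the paper: apply $\partial^{\alpha}$ to \eqref{rem1-1}, pair with $\partial^{\alpha}N_{R}$, split into commutators plus leading-order pieces, integrate the transport part by parts using $\operatorname{div}u^{(0)}=0$, and absorb the $n\,\partial^{\alpha}\operatorname{div}U_{R}$ cross term via Young's inequality with the small weight $(\mu+\lambda)/32$, leaving the remaining $\varepsilon$-order terms to be bounded crudely. The paper organizes this exactly as $-\int[\partial^{\alpha},u]\nabla N_{R}\,\partial^{\alpha}N_{R}+\tfrac12\int\operatorname{div}u|\partial^{\alpha}N_{R}|^{2}-\int[\partial^{\alpha},n]\operatorname{div}U_{R}\,\partial^{\alpha}N_{R}-\int n\,\partial^{\alpha}\operatorname{div}U_{R}\,\partial^{\alpha}N_{R}$ and chooses $\delta=\tfrac1{32}$, which is precisely your decomposition and choice of constant.
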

\begin{proof}
Applying the operator $\partial^{\alpha}$ to \eqref{rem1-1} and taking inner product with $\partial^{\alpha}N_{R}$,  it holds
\begin{equation}\label{d1}
\begin{split}
&\frac{1}{2}\frac{d}{dt}\|\partial^{\alpha}N_{R}\|^2\\= &-\int \partial^{\alpha}(u\cdot\nabla N_{R})\partial^{\alpha}N_{R}-\int\partial^{\alpha}(n\mathrm{div}U_{R})\partial^{\alpha}N_{R}-\varepsilon\int\partial^{\alpha}(U_{R}\cdot\nabla\tilde{n})\partial^{\alpha}N_{R}\notag\\&-\varepsilon\int\partial^{\alpha}(N_{R}\mathrm{div}\tilde{u})\partial^{\alpha}N_{R}-\varepsilon\int\partial^{\alpha}\Re_{1}\partial^{\alpha}N_{R}.
\end{split}
\end{equation}
We shall estimate the first two terms of the right hand side. Combining the fact $\mathrm{div}u^{(0)}=0$ with integration by parts, Sobolev embedding and the commutator estimates Remark \ref{R1} in the Appendix, we obtain
\begin{equation*}
\begin{split}
&-\int \partial^{\alpha}(u\cdot\nabla N_{R})\partial^{\alpha}N_{R}-\int\partial^{\alpha}(n\mathrm{div}U_{R})\partial^{\alpha}N_{R}\\=&-\int[\partial^{\alpha},u]\nabla N_{R}\partial^{\alpha}N_{R}+\frac{1}{2}\int \mathrm{div}u|\partial^{\alpha}N_{R}|^{2}-\int[\partial^{\alpha},n]\mathrm{div} U_{R}\partial^{\alpha}N_{R}\notag\\&-\int n\partial^{\alpha}(\mathrm{div} U_{R})\partial^{\alpha}N_{R}\\
\lesssim &(1+\varepsilon^{N}\|(N_{R},U_{R})\|_{H^{3}})\|(N_{R},U_{R})\|_{H^{3}}^2+(\mu+\lambda)\delta\|\mathrm{div} U_{R}\|_{H^{k}}^{2}+\frac{1}{(\mu+\lambda)\delta}\|N_{R}\|_{H^{k}}^{2},
\end{split}
\end{equation*}
for any sufficiently small positive constant $\delta$.
It is obvious that the last three terms can be bounded by $\varepsilon+\|(N_{R},U_{R})\|_{H^{3}}^{2}.$
The proof is complete by taking $\delta=\frac 1{32}$.
\end{proof}
\begin{lemma}\label{G2}
Under the assumptions in Lemma \ref{G1}, we obtain
\begin{equation}\label{g2}
\begin{split}
&\frac{d}{dt}\|\partial^{\alpha}(U_{R},\hbar\nabla N_{R},\nabla\Phi_{R})\|_{L^{2}}^2+\mu\|\partial^{\alpha}\nabla U_{R}\|_{L^{2}}^2+(\mu+\lambda)\|\partial^{\alpha}\mathrm{div}U_{R}\|_{L^{2}}^2\\&\lesssim(1+\varepsilon^{3N}\|(U_{R},T_{R},N_{R},\hbar\nabla N_{R})\|_{H^{3}}^{6})\|(U_{R},N_{R},T_{R},\nabla\Phi_{R},\hbar\nabla N_{R},\hbar\nabla U_{R},\hbar\mathrm{div}U_{R},\\&\ \ \ \ \hbar^{2}\Delta N_{R} )\|_{H^{3}}^2+\frac \kappa{32}\|\nabla T_{R}\|_{H^{k}}^2+\varepsilon.
\end{split}
\end{equation}
\end{lemma}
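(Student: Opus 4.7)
I will apply $\partial^\alpha$ ($|\alpha|=k\le 3$) to the momentum equation \eqref{rem1-2} and take the $L^2$ inner product with $\partial^\alpha U_R$. The viscous terms $-\mu\Delta U_R/n-(\mu+\lambda)\nabla\mathrm{div}U_R/n$ after integration by parts produce, up to commutators with $1/n$ controlled by Lemma \ref{L2}, the dissipations $\mu\|\partial^\alpha\nabla U_R\|^2_{L^2}$ and $(\mu+\lambda)\|\partial^\alpha\mathrm{div}U_R\|^2_{L^2}$ on the left. The transport term $u\cdot\nabla U_R$ contributes $\tfrac12\int\mathrm{div}u\,|\partial^\alpha U_R|^2$ after integration by parts, absorbable since $\mathrm{div}u=O(\varepsilon)+\varepsilon^N\mathrm{div}U_R$, and the pressure-type piece $T\nabla N_R/n$ plus the forcing $F$ produce polynomial terms controlled by the triple norm after the Moser/commutator estimates of the appendix.

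\textbf{Poisson term.} The singular contribution $-\varepsilon^{-1/2}\nabla\Phi_R\cdot\partial^\alpha U_R$ is the key coupling. Integrating by parts and using the continuity equation \eqref{rem1-1} to write $\partial^\alpha\mathrm{div}U_R=-\tfrac{1}{n}\partial_t\partial^\alpha N_R+\mathrm{l.o.t.}$, then the Poisson relation \eqref{rem1-4} in the form $\partial_t N_R=\varepsilon^{1/2}\partial_t\Delta\Phi_R+\varepsilon\,\partial_t\Delta\phi^{(N)}$, the $\varepsilon^{-1/2}$ cancels and one obtains after a further integration by parts
\[
-\varepsilon^{-1/2}\!\int\partial^\alpha\nabla\Phi_R\cdot\partial^\alpha U_R=\frac{d}{dt}\!\int\!\frac{|\nabla\partial^\alpha\Phi_R|^2}{2n}+\mathrm{l.o.t.},
\]
which delivers the $\|\nabla\Phi_R\|^2$ part of the target energy. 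The symbol $\mathrm{l.o.t.}$ collects $\partial_t(1/n)$ and commutator contributions, each estimated by Lemma \ref{L1}, Lemma \ref{L2}, and \eqref{R1}.

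\textbf{Quantum term.} The Bohm contribution $-\tfrac{\hbar^2}{12}\nabla\Delta N_R/n\cdot\partial^\alpha U_R$ is handled by the same mechanism: one integration by parts in space, use of the continuity equation to replace $\partial^\alpha\mathrm{div}U_R$ by $-\tfrac{1}{n}\partial_t\partial^\alpha N_R+\mathrm{l.o.t.}$, and a second spatial integration by parts yield
\[
-\frac{\hbar^2}{12}\!\int\!\frac{\partial^\alpha\nabla\Delta N_R}{n}\cdot\partial^\alpha U_R=\frac{d}{dt}\!\int\!\frac{\hbar^2|\partial^\alpha\nabla N_R|^2}{24n^2}+\mathrm{l.o.t.},
\]
producing the $\|\hbar\nabla N_R\|^2$ component. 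The non-constant factors $1/n$ and $1/n^2$ generate commutators containing up to three derivatives of $1/n$ and therefore up to $\varepsilon^{3N}\|N_R\|_{H^5}^3$-type nonlinearities, all bounded by the triple norm via Lemma \ref{L2}.

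\textbf{Main obstacle and closure.} The delicate step is bookkeeping: when converting $\partial^\alpha\mathrm{div}U_R$ into $\partial_t\partial^\alpha N_R$ via the continuity equation, derivatives of order up to $k+2$ on $N_R$ appear, and the nonlinear terms in $F$ include $\varepsilon^N\nabla N_R\cdot\Delta N_R/n^2$ and $\varepsilon^{2N}|\nabla N_R|^2\nabla N_R/n^3$. These are exactly the terms that force the quantum-weighted triple norm \eqref{def-A} to include $\|\hbar\nabla N_R\|_{H^3}^2$, $\|\hbar\Delta\Phi_R\|_{H^3}^2$ and $\|\hbar^2\Delta N_R\|_{H^3}^2$: each such product is split using Hölder with a factor bearing $\hbar$ (controlled by the LHS energy) and a factor bearing $\hbar$ or $\hbar^2$ treated on the RHS, leaving a prefactor $1+\varepsilon^{3N}\|(U_R,T_R,N_R,\hbar\nabla N_R)\|_{H^3}^6$ that is harmless under the a priori bound \eqref{prior} once $\varepsilon$ is small. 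Finally, the term $\tfrac{\kappa}{32}\|\nabla T_R\|_{H^k}^2$ on the right arises from estimating $-\nabla T_R$ inside $F$ by Cauchy--Schwarz with a small weight so that it can later be absorbed by the thermal dissipation appearing in the corresponding estimate for $T_R$.
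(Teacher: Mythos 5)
Your strategy coincides with the paper's proof: test the differentiated momentum equation against $\partial^\alpha U_R$, extract the viscous dissipations directly, and convert both the singular $\varepsilon^{-1/2}\nabla\Phi_R$ term and the quantum $\hbar^2\nabla\Delta N_R/n$ term into exact time derivatives of $\|\partial^\alpha\nabla\Phi_R\|^2$ and $\hbar^2\int|\partial^\alpha\nabla N_R|^2/n^2$ by substituting $\mathrm{div}U_R$ from the continuity equation \eqref{rem1-1} and then invoking the Poisson relation \eqref{rem1-4}. The only cosmetic difference is that you carry a $1/n$ weight into the $\|\nabla\Phi_R\|^2$ energy (via $\partial^\alpha\mathrm{div}U_R=-n^{-1}\partial_t\partial^\alpha N_R+\cdots$), whereas the paper first re-groups \eqref{rem1-1} so that $\mathrm{div}U_R=-(\partial_tN_R+\mathrm{div}(\varepsilon\tilde n U_R+uN_R)+\cdots)$ and obtains $-\tfrac{d}{dt}\|\partial^\alpha\nabla\Phi_R\|_{L^2}^2$ without the $n$-weight; both variants close with the same commutator bookkeeping.
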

\begin{proof}
Applying the operator $\partial^{\alpha}$ to \eqref{rem1-2} and taking inner product with $\partial^{\alpha}U_{R}$, we derive
\begin{equation}\label{g2}
\begin{split}
\frac{1}{2}\frac{d}{dt}\|\partial^{\alpha}U_{R}\|_{L^{2}}^2=& \int\partial^{\alpha}F\partial^{\alpha}U_{R}-\int\partial^{\alpha}(u\cdot\nabla U_{R})\cdot \partial^{\alpha}U_{R}-\int\partial^{\alpha}(\frac{T\nabla N_{R}}{n})\cdot\partial^{\alpha}U_{R}\\&
+\frac{\hbar^{2}}{12}\int\partial^{\alpha}(\frac{\nabla\Delta N_{R}}{n})\cdot\partial^{\alpha}U_{R}+\mu\int\partial^{\alpha}(\frac{\Delta U_{R}}{n})\cdot\partial^{\alpha}U_{R}
\\&+(\mu+\lambda)\int\partial^{\alpha}(\frac{\nabla\mathrm{div}U_{R}}{n})\cdot\partial^{\alpha}U_{R}+\frac{1}{\varepsilon^{\frac{1}{2}}}\int\partial^{\alpha}\nabla\Phi_{R}\cdot\partial^{\alpha}U_{R}\\
&= \sum_{i=1}^{7}R_{2,i}.
\end{split}
\end{equation}
The first term $R_{2,1}$ can be bounded by Young's inequality, Lemma \ref{L2}, Lemma \ref{L3} and \eqref{r1}$\sim$\eqref{r5} in the Appendix that
\begin{equation*}
\begin{split}
R_{2,1}\lesssim & \kappa\delta\|\nabla T_{R}\|_{H^{k}}^{2}\notag\\&+(1+\varepsilon^{5N}\|(N_{R},\hbar\nabla N_{R})\|_{H^{3}}^{5})\|(\hbar^{2}\Delta N_{R},\hbar\nabla N_{R},U_{R},N_{R},T_{R})\|_{H^{3}}^{2}+\varepsilon,
\end{split}
\end{equation*}
for any sufficiently small positive constant $\delta$. Here, we only estimate the two terms in $R_{2,1}$,
\begin{equation*}
\begin{split}
&-\frac{\hbar^{2}}{12}\int\partial^{\alpha}\Big(\varepsilon^{N}\frac{\nabla N_{R}\Delta N_{R}+\nabla N_{R}\cdot\nabla^{2}N_{R}}{n^{2}}-\varepsilon^{2N}\frac{(\nabla N_{R}\cdot\nabla N_{R})\nabla N_{R}}{n^{3}}\Big)\partial^{\alpha}U_{R}\\
\lesssim &\hbar^{2}\|U_{R}\|_{H^{3}}\Big\{\varepsilon^{N}(\|\frac{1}{n^{2}}\|_{L^{\infty}}\|\nabla N_{R}\|_{H^{k}}+\|\frac{1}{n^{2}}\|_{H^{3}}\|\nabla N_{R}\|_{L^{\infty}})\|(\Delta N_{R},\nabla^{2}N_{R})\|_{L^{\infty}}\notag\\&+\varepsilon^{N}\|\frac{\nabla N_{R}}{n^{2}}\|_{L^{\infty}}\|(\Delta N_{R},\nabla^{2}N_{R})\|_{H^{k}}+\varepsilon^{2N}\|(\frac{1}{n})^{3}\|_{H^{3}}\|(\nabla N_{R})^{3}\|_{H^{3}}\Big\}\\
\lesssim &(1+\varepsilon^{5N}\|(N_{R},\hbar\nabla N_{R})\|_{H^{3}}^{5})\|(\hbar^{2}\Delta N_{R},\hbar\nabla N_{R},U_{R})\|_{H^{3}}^{2},
\end{split}
\end{equation*}
thanks to Lemma \ref{L3} and \eqref{r2}$\backsim$\eqref{r5} in the Appendix.

For the second term, based on integration by parts and the commutator estimates Remark \ref{R1} in the Appendix, we deduce
\begin{equation*}
\begin{split}
R_{2,2}= &-\int[\partial^{\alpha},u]\nabla U_{R}\partial^{\alpha}U_{R}-\int u\cdot\partial^{\alpha}\nabla U_{R}\partial^{\alpha}U_{R}\lesssim (1+\varepsilon^{N}\|U_{R}\|_{H^{3}})\|U_{R}\|_{H^{3}}^{2}.
\end{split}
\end{equation*}
Now, we will estimate the term $R_{2,3}$ since there is $\nabla N_{R}$ and we can not close our estimates with this. Owing to Lemma \ref{L2}, integration by parts, Young's inequality, and the commutator estimates Remark \ref{R1} in the Appendix, we have
\begin{equation*}
\begin{split}
R_{2,3}= &-\int\partial^{\alpha}(\frac{T\nabla N_{R}}{n})\cdot\partial^{\alpha}U_{R}= -\int[\partial^{\alpha},\frac{T}{n}]\nabla N_{R}\cdot\partial^{\alpha}U_{R}-\int\frac{T}{n}\partial^{\alpha}\nabla N_{R}\cdot\partial^{\alpha}U_{R}\\
= &-\int[\partial^{\alpha},\frac{T}{n}]\nabla N_{R}\cdot\partial^{\alpha}U_{R}+\int\nabla(\frac{T}{n})\partial^{\alpha}N_{R}\partial^{\alpha}U_{R}+\int\frac{T}{n}\partial^{\alpha}N_{R}\partial^{\alpha}\mathrm{div}U_{R}\\
\lesssim &(\mu+\lambda)\delta\|\mathrm{div}U_{R}\|_{H^{k}}^{2}+(1+\varepsilon^{4N}\|(N_{R},T_{R})\|_{H^{3}}^{4})\|(N_{R},U_{R})\|_{H^{3}}^{2},
\end{split}
\end{equation*}
for any positive constant $\delta$.

The estimate for $R_{2,4}$ in \eqref{g2} requires much efforts since it involves higher order terms. From \eqref{rem1-1} and the commutator, it is easy to compute
$$n\mathrm{div}U_{R}=-(\partial_{t}N_{R}+u\cdot\nabla N_{R}+\varepsilon U_{R}\cdot\nabla\tilde{n}+\varepsilon N_{R}\mathrm{div}\tilde{u}+\varepsilon\Re_{1}),$$and
$$n\partial^{\alpha}\mathrm{div}U_{R}=\partial^{\alpha}(n\mathrm{div}U_{R})-[\partial^{\alpha},n]\mathrm{div}U_{R}.$$
That is to say, we can insert the above results into $R_{2,4}$ to decompose
\begin{equation*}
\begin{split}
R_{2,4}= &\frac{\hbar^{2}}{12}\int\partial^{\alpha}(\frac{\nabla\Delta N_{R}}{n})\cdot\partial^{\alpha}U_{R}\\
= &\frac{\hbar^{2}}{12}\int[\partial^{\alpha},\frac{1}{n}]\nabla\Delta N_{R}\cdot\partial^{\alpha}U_{R}+\frac{\hbar^{2}}{12}\int\frac{\partial^{\alpha}(\nabla\Delta N_{R})\cdot\partial^{\alpha}U_{R}}{n}\\
= &\frac{\hbar^{2}}{12}\int[\partial^{\alpha},\frac{1}{n}]\nabla\Delta N_{R}\cdot\partial^{\alpha}U_{R}+\frac{\hbar^{2}}{12}\int\frac{\partial^{\alpha}(\Delta N_{R})\nabla n\cdot\partial^{\alpha}U_{R}}{n^{2}}\notag\\&-\frac{\hbar^{2}}{12}\int\frac{\partial^{\alpha}(\Delta N_{R})\partial^{\alpha}\mathrm{div}U_{R}}{n}\\
= &-\frac{\hbar^{2}}{24}\frac{d}{dt}\int\frac{|\partial^{\alpha}\nabla N_{R}|^{2}}{n^{2}}+\frac{\hbar^{2}}{12}\int\frac{\partial^{\alpha}\Delta N_{R}\partial^{\alpha}(u\cdot\nabla N_{R})}{n^{2}}+R_{2,4,1},
\end{split}
\end{equation*}
where $R_{2,4,1}$ is denoted by
\begin{equation*}
\begin{split}
&R_{2,4,1}\\
= &\frac{\hbar^{2}}{12}\int[\partial^{\alpha},\frac{1}{n}]\nabla\Delta N_{R}\cdot\partial^{\alpha}U_{R}+\frac{\hbar^{2}}{12}\int\frac{\partial^{\alpha}(\Delta N_{R})\nabla n\cdot\partial^{\alpha}U_{R}}{n^{2}}-\frac{\hbar^{2}}{12}\int\frac{|\partial^{\alpha}\nabla N_{R}|^{2}}{n^{3}}\partial_{t}n\notag\\&+\frac{\hbar^{2}}{6}\int\frac{\nabla n\cdot\partial^{\alpha}\nabla N_{R}\partial^{\alpha}\partial_{t}N_{R}}{n^{3}}+\frac{\hbar^{2}}{12}\varepsilon\int\frac{\partial^{\alpha}\Delta N_{R}\partial^{\alpha}(U_{R}\cdot\nabla\tilde{n}+N_{R}\mathrm{div}\tilde{u}+\Re_{1})}{n^{2}}\notag\\&
+\frac{\hbar^{2}}{12}\int\frac{\partial^{\alpha}\Delta N_{R}[\partial^{\alpha},n]\mathrm{div}U_{R}}{n^{2}}.
\end{split}
\end{equation*}
Moreover, by the commutator estimates, integration by parts, H\"older inequality, Sobolev embedding $H^{2}\hookrightarrow L^{\infty}$ and Lemma \ref{L1}, Lemma \ref{L2}, it follows that
\begin{equation*}
\begin{split}
R_{2,4,1}
\lesssim &(1+\varepsilon^{\frac 32N}\|(U_{R},N_{R})\|_{H^{3}}^{3})\|(\hbar\nabla N_{R},\hbar\mathrm{div}U_{R},\hbar^{2}\Delta N_{R},N_{R},U_{R})\|_{H^{3}}^{2}+\varepsilon.
\end{split}
\end{equation*}
Thanks to Lemma \ref{L2}, integration by parts and the commutator estimates Remark \ref{R1} in the Appendix, the rest one in $R_{2,4}$ can be bounded by
\begin{equation*}
\begin{split}
&\frac{\hbar^{2}}{12}\int\frac{\partial^{\alpha}\Delta N_{R}\partial^{\alpha}(u\cdot\nabla N_{R})}{n^{2}}\\
= &\frac{\hbar^{2}}{6}\int\frac{\nabla n\cdot\partial^{\alpha}\nabla N_{R}\partial^{\alpha}(u\cdot\nabla N_{R})}{n^{3}}-\frac{\hbar^{2}}{12}\int\frac{\partial^{\alpha}\nabla N_{R}[\partial^{\alpha}\nabla,u]\cdot\nabla N_{R}}{n^{2}}\notag\\&+\frac{\hbar^{2}}{24}\int\mathrm{div}(\frac{u}{n^{2}})|\partial^{\alpha}\nabla N_{R}|^{2}\\
\lesssim &\hbar^{2}\|\nabla n\|_{L^{\infty}}\|\nabla N_{R}\|_{H^{k}}(\|u\|_{L^{\infty}}\|\nabla N_{R}\|_{H^{k}}+\|u\|_{H^{k}}\|\nabla N_{R}\|_{L^{\infty}})\notag\\&+\hbar^{2}\|\nabla N_{R}\|_{H^{k}}\big(\|\nabla u\|_{L^{\infty}}\|\nabla N_{R}\|_{H^{k}}+\|\nabla u\|_{H^{k}}\|\nabla N_{R}\|_{L^{\infty}}\big)
\notag\\&+\hbar^{2}\|\mathrm{div}(\frac{u}{n^{2}})\|_{L^{\infty}}\|\nabla N_{R}\|_{H^{k}}^{2}\\
\lesssim &(1+\varepsilon^{2N}\|(N_{R},U_{R})\|_{H^{3}}^{2})\|(N_{R},\hbar\nabla U_{R},\hbar\nabla N_{R})\|_{H^{3}}^{2}.
\end{split}
\end{equation*}
Therefore, combining all estimates for $R_{2,4}$, we can obtain
\begin{equation*}
\begin{split}
R_{2,4}\lesssim &-\frac{\hbar^{2}}{24}\frac{d}{dt}\int\frac{|\partial^{\alpha}\nabla N_{R}|^{2}}{n^{2}}+\varepsilon\notag\\&+(1+\varepsilon^{\frac 32N}\|(U_{R},N_{R})\|_{H^{3}}^{3})\|(N_{R},U_{R},\hbar\nabla N_{R},\hbar\nabla U_{R},\hbar\mathrm{div}U_{R},\hbar^{2}\Delta N_{R})\|_{H^{3}}^2.
\end{split}
\end{equation*}
Next, we estimate the terms $R_{2,5}$. It is treated by integration by parts, Young's inequality and Remark \ref{R1} in the Appendix that
\begin{equation*}
\begin{split}
R_{2,5}= &\mu\int\partial^{\alpha}(\frac{\Delta U_{R}}{n})\cdot\partial^{\alpha}U_{R}= \mu\int[\partial^{\alpha},\frac 1n]\Delta U_{R}\cdot\partial^{\alpha}U_{R}+\mu\int\frac{\partial^{\alpha}\Delta U_{R}\cdot\partial^{\alpha}U_{R}}{n}\\
= &\mu\int[\partial^{\alpha},\frac 1n]\Delta U_{R}\cdot\partial^{\alpha}U_{R}-\mu\int\frac{|\partial^{\alpha}\nabla U_{R}|^{2}}{n}+\mu\int\frac{\nabla n\cdot\partial^{\alpha}\nabla U_{R}\cdot\partial^{\alpha}U_{R}}{n^{2}}\\
\lesssim &-\mu\int\frac{|\partial^{\alpha}\nabla U_{R}|^{2}}{n}+\mu\delta\|\nabla U_{R}\|_{H^{3}}^{2}+(1+\varepsilon^{6N}\|N_{R}\|_{H^{3}}^{6})\|U_{R}\|_{H^{k}}^{2},
\end{split}
\end{equation*}
for sufficiently small positive constant $\delta$.
Similar to $R_{2,5}$, we obtain
\begin{equation*}
\begin{split}
R_{2,6}\lesssim &-(\mu+\lambda)\int\frac{|\partial^{\alpha}\mathrm{div}U_{R}|^{2}}{n}+(\mu+\lambda)\delta\|\mathrm{div} U_{R}\|_{H^{3}}^{2}+(1+\varepsilon^{6N}\|N_{R}\|_{H^{3}}^{6})\|U_{R}\|_{H^{k}}^{2}.
\end{split}
\end{equation*}
Now we will estimate the last term. From \eqref{rem1-1}, we can compute the following equality
$$\mathrm{div}U_{R}=-\Big(\partial_{t}N_{R}+\mathrm{div}(\varepsilon\tilde{n}U_{R}+uN_{R}+\varepsilon\Re_{1})\Big).$$
Plugging this above equality into $R_{2,7}$, together with \eqref{rem1-4}, we obtain
\begin{equation*}
\begin{split}
&\frac{1}{\varepsilon^{\frac{1}{2}}}\int\partial^{\alpha}\nabla\Phi_{R}\cdot\partial^{\alpha}U_{R} =-\frac{1}{\varepsilon^{\frac{1}{2}}}\int\partial^{\alpha}\Phi_{R}\partial^{\alpha}\mathrm{div}U_{R}\\
= &\int\partial^{\alpha}\Phi_{R}\partial^{\alpha}(\Delta\partial_{t}\Phi_{R}+\varepsilon^{\frac{1}{2}}\Delta\partial_{t}\phi^{(N)})-
\frac{1}{\varepsilon^{\frac{1}{2}}}\int\partial^{\alpha}\nabla\Phi_{R}\cdot\partial^{\alpha}(\varepsilon\tilde{n}U_{R}+uN_{R}+\varepsilon\Re_{1})\\
= &-\frac{d}{dt}\frac 12\|\partial^{\alpha}\nabla\Phi_{R}\|_{L^{2}}^{2}-\varepsilon^{\frac{1}{2}}\int\partial^{\alpha}\nabla\Phi_{R}\cdot\nabla\partial_{t}\phi^{(N)}-
\varepsilon^{\frac{1}{2}}\int\partial^{\alpha}\nabla\Phi_{R}\cdot\partial^{\alpha}(\tilde{n}U_{R}+\Re_{1})\notag\\&-
\int\partial^{\alpha}\nabla\Phi_{R}\cdot\partial^{\alpha}\big(u(\Delta\Phi_{R}+\varepsilon^{\frac{1}{2}}\Delta\phi^{(N)})\big)\\
= &-\frac{d}{dt}\|\partial^{\alpha}\nabla\Phi_{R}\|_{L^{2}}^{2}-\varepsilon^{\frac{1}{2}}\int\partial^{\alpha}\nabla\Phi_{R}\cdot\nabla\partial_{t}\phi^{(N)}
-\varepsilon^{\frac{1}{2}}\int\partial^{\alpha}\nabla\Phi_{R}\cdot\partial^{\alpha}(\tilde{n}U_{R}+\Re_{1})\notag\\&-
\int\partial^{\alpha}\nabla\Phi_{R}\cdot[\partial^{\alpha},u]\Delta\Phi_{R}-\int\partial^{\alpha}\nabla\Phi_{R}\cdot (u\partial^{\alpha}\Delta\Phi_{R})-\varepsilon^{\frac{1}{2}}\int\partial^{\alpha}\nabla\Phi_{R}\cdot\partial^{\alpha}(u\Delta\phi^{(N)})\\
= &-\frac{d}{dt}\|\partial^{\alpha}\nabla\Phi_{R}\|_{L^{2}}^{2}-\varepsilon^{\frac{1}{2}}\int\partial^{\alpha}\nabla\Phi_{R}\cdot\nabla\partial_{t}\phi^{(N)}
-\varepsilon^{\frac{1}{2}}\int\partial^{\alpha}\nabla\Phi_{R}\cdot\partial^{\alpha}(\tilde{n}U_{R}+\Re_{1})\notag\\&-
\int\partial^{\alpha}\nabla\Phi_{R}\cdot[\partial^{\alpha},u]\Delta\Phi_{R}+\int\partial^{\alpha}\nabla\Phi_{R}\cdot(\nabla u\cdot\partial^{\alpha}\nabla\Phi_{R})
-\frac 12\int\mathrm{div}u|\partial^{\alpha}\nabla\Phi_{R}|^{2}\notag\\&-\varepsilon^{\frac{1}{2}}\int\partial^{\alpha}\nabla\Phi_{R}\cdot\partial^{\alpha}(u\Delta\phi^{(N)}).
\end{split}
\end{equation*}
Finally, by Young's inequality and the commutator estimate Remark \ref{R1} in the Appendix, we have
\begin{equation*}
\begin{split}
R_{2,7}\lesssim &-\frac{d}{dt}\|\partial^{\alpha}\nabla\Phi_{R}\|_{L^{2}}^{2}+\|\nabla\Phi_{R}\|_{H^{k}}^{2}+\varepsilon+
\varepsilon\|U_{R}\|_{H^{3}}^{2}+\varepsilon^{\frac{1}{2}}\|\nabla\Phi_{R}\|_{H^{k}}\|u\|_{H^{3}}\notag\\&+\|\nabla\Phi_{R}\|_{H^{k}}(\|\nabla u\|_{L^{\infty}}\|\Delta\Phi_{R}\|_{H^{k-1}}+\|u\|_{H^{k}}\|\Delta\Phi_{R}\|_{L^{\infty}})\notag\\&+\|(\nabla u,\mathrm{div}u)\|_{L^{\infty}}\|\nabla\Phi_{R}\|_{H^{k}}^{2}\\
\lesssim &-\frac{d}{dt}\|\partial^{\alpha}\nabla\Phi_{R}\|_{L^{2}}^{2}+(1+\varepsilon^{2N}\|U_{R}\|_{H^{3}}^{2})\|(U_{R},\nabla\Phi_{R})\|_{H^{3}}^{2}+\varepsilon.
\end{split}
\end{equation*}
Putting the estimates together, and taking $\delta=\frac 1{32}$, we complete the proof of lemma \ref{G2}.
\end{proof}
\begin{lemma}\label{G3}
Under the assumptions in Lemma \ref{G1}, we obtain
\begin{equation}\label{g3}
\begin{split}
&\frac{d}{dt}\|\partial^{\alpha}T_{R}\|_{L^{2}}^2+\kappa\|\partial^{\alpha}\nabla T_{R}\|_{L^{2}}^2\\
\lesssim &(1+\varepsilon^{6N}\|(U_{R},N_{R},T_{R})\|_{H^{3}}^{6}\|(N_{R},U_{R},T_{R},\hbar\nabla N_{R},\hbar\nabla U_{R})\|_{H^{3}}^2\\&+\frac {(\mu+\lambda)}{16}\|\mathrm{div}U_{R}\|_{H^{k}}^2+\frac {\mu}{32}\|\nabla U_{R}\|_{H^{k}}^2+\hbar^{4}\|\Delta U_{R}\|_{H^{k}}^2+\varepsilon.
\end{split}
\end{equation}
\end{lemma}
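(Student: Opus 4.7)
The plan is to apply $\partial^\alpha$ with $|\alpha|=k\leq 3$ to \eqref{rem1-3} and pair with $\partial^\alpha T_R$ in $L^2$, obtaining the decomposition
\begin{equation*}
\frac{1}{2}\frac{d}{dt}\|\partial^\alpha T_R\|_{L^2}^2 = R_{3,1}+R_{3,2}+R_{3,3}+R_{3,4}+R_{3,5},
\end{equation*}
where $R_{3,1}=-\int\partial^\alpha(u\cdot\nabla T_R)\partial^\alpha T_R$ is the convective term, $R_{3,2}=\frac{2\kappa}{3}\int\partial^\alpha(\Delta T_R/n)\partial^\alpha T_R$ is the heat-diffusion term, $R_{3,3}=-\frac{\hbar^2}{36}\int\partial^\alpha\mathrm{div}\Delta U_R\cdot\partial^\alpha T_R$ is the linear quantum coupling, $R_{3,4}=-\frac{\hbar^2\varepsilon^N}{36}\int\partial^\alpha(\nabla N_R\cdot\Delta U_R/n)\partial^\alpha T_R$ is its nonlinear counterpart, and $R_{3,5}=\int\partial^\alpha G\cdot\partial^\alpha T_R$ is the forcing.

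First, I would treat $R_{3,1}$ exactly as $R_{2,2}$ in Lemma \ref{G2}: integrate by parts to extract $\frac{1}{2}\int\mathrm{div}u\,|\partial^\alpha T_R|^2$ and peel off the commutator $[\partial^\alpha,u]\nabla T_R$ via Remark \ref{R1}; using $\mathrm{div}u^{(0)}=0$ together with the fact that $u-u^{(0)}$ carries factors of $\varepsilon$ or $\varepsilon^N$, this yields a bound of the form $(1+\varepsilon^N\|U_R\|_{H^3})\|T_R\|_{H^3}^2+\varepsilon$. For $R_{3,2}$ I would split $\partial^\alpha(\Delta T_R/n)=[\partial^\alpha,1/n]\Delta T_R+\partial^\alpha\Delta T_R/n$, integrate by parts on the second piece, and invoke $1/2<n<3/2$ to extract the target dissipation $-\kappa\|\partial^\alpha\nabla T_R\|_{L^2}^2$ (up to a universal constant); the cross term $\int\nabla(1/n)\cdot\partial^\alpha\nabla T_R\,\partial^\alpha T_R$ and the commutator are absorbed via Young's inequality against $\kappa\|\nabla T_R\|_{H^k}^2$ (with tiny weight), together with Lemma \ref{L2} for the $1/n$ factors.

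The hard part will be $R_{3,3}$: it pairs the third-order derivative $\mathrm{div}\Delta U_R$ against $T_R$, yet the only dissipation available for $T_R$ is first order. My strategy is a single integration by parts,
\begin{equation*}
\int\partial^\alpha\mathrm{div}\Delta U_R\cdot\partial^\alpha T_R = -\int\partial^\alpha\Delta U_R\cdot\partial^\alpha\nabla T_R,
\end{equation*}
followed by Young's inequality with weight calibrated so that the $T_R$-part is absorbed into $\frac{\kappa}{32}\|\nabla T_R\|_{H^k}^2$ and the $U_R$-part becomes exactly $\hbar^4\|\Delta U_R\|_{H^k}^2$---matching the target on the right-hand side of \eqref{g3}. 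For $R_{3,4}$ the prefactor $\varepsilon^N$ (with $\varepsilon^N\ll\hbar^2\ll 1$) combined with Lemma \ref{L2} and the product estimates \eqref{r1}--\eqref{r5} lets the whole term be absorbed either into $\hbar^4\|\Delta U_R\|_{H^k}^2$ or into the polynomial-in-energy prefactor of \eqref{g3}.

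Finally, for $R_{3,5}$ I would dissect $G$ term by term: the transport-like pieces $U_R\cdot(\nabla T^{(0)}+\varepsilon\nabla\tilde T)$ by Cauchy--Schwarz; the pressure-like $\frac{2}{3}T\mathrm{div}U_R$ by Young to give the $\frac{\mu+\lambda}{16}\|\mathrm{div}U_R\|_{H^k}^2$ contribution; the $\hbar^2$ cross terms $\nabla N_R\cdot\Delta u^{(0)}$ and $\nabla\tilde n\cdot\Delta U_R$ integrated by parts and estimated in the $\hbar$-weighted norm; the quadratic viscous pieces $(\nabla u^{(0)}+\varepsilon\nabla\tilde u)\cdot(\nabla U_R+(\nabla U_R)^\top)/n$ by Young to produce $\frac{\mu}{32}\|\nabla U_R\|_{H^k}^2$; the cubic $\varepsilon^N$ contributions controlled by the a priori bound \eqref{prior} and Lemma \ref{L2}; and the $\Re_{3j}$ remainders providing the $\varepsilon$ via \eqref{rem2}. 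Summing $R_{3,1}$--$R_{3,5}$ with Young weights chosen as $\frac{\kappa}{32}$, $\frac{\mu}{32}$, and $\frac{\mu+\lambda}{16}$ then assembles \eqref{g3}.
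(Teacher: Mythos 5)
Your proposal is correct and follows essentially the same route as the paper's proof: $\partial^\alpha$-pair with $\partial^\alpha T_R$, commutator plus integration by parts for the convective term, split $\Delta T_R/n$ to extract the $\kappa$-dissipation, a single integration by parts plus calibrated Young for the $\hbar^2\,\mathrm{div}\Delta U_R$ term, and a termwise Young/Cauchy--Schwarz treatment of $G$ using Lemma \ref{L1}, Lemma \ref{L2} and \eqref{r1}--\eqref{r5}. Your splitting of the quantum contribution into two pieces (the paper keeps them together as a single $R_{3,4}$) and the parenthetical assumption $\varepsilon^N\ll\hbar^2$ are harmless cosmetic differences; the paper needs only $\hbar\ll1$ for absorbing $\hbar^4\|\Delta U_R\|_{H^k}^2$ later, and otherwise simply carries the $\varepsilon^{4N}$-weighted polynomial in the prefactor.
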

\begin{proof}
Applying the operator $\partial^{\alpha}$ to \eqref{rem1-3} and taking inner product with $\partial^{\alpha}T_{R}$ yield
\begin{equation}
\frac{}{}\begin{split}
\frac{1}{2}\frac{d}{dt}\|\partial^{\alpha}T_{R}\|_{L^{2}}^2= &\int\partial^{\alpha}G\partial^{\alpha}T_{R}-\int\partial^{\alpha}(u\cdot\nabla T_{R})\partial^{\alpha}T_{R}+
\frac{2\kappa}{3}\int \partial^{\alpha}(\frac{\Delta T_{R}}{n})\partial^{\alpha}T_{R}\notag\\&-\frac{\hbar^{2}}{36}\int\partial^{\alpha}\Big(\mathrm{div}\Delta U_{R}+\varepsilon^{N}\frac{\nabla N_{R}\Delta U_{R}}{n}\Big)\partial^{\alpha}T_{R}
= \sum_{i=1}^{4}R_{3,i}.
\end{split}
\end{equation}
Thanks to lemma \ref{L1}, \eqref{f1}, and \eqref{r1}$\sim$\eqref{r5} in the Appendix, $R_{3,1}$ can be estimated by
\begin{equation*}
\begin{split}
&2(\mu+\lambda)\delta\|\mathrm{div}U_{R}\|_{H^{k}}^{2}+\mu\delta\|\nabla U_{R}\|_{H^{k}}^{2}+\hbar^{4}\|\Delta U_{R}\|_{H^k}^{2}\notag\\&+(1+\varepsilon^{6N}\|(N_{R},U_{R},T_{R})\|_{H^{3}}^{6})\|(\hbar\nabla N_{R},\hbar\nabla U_{R},N_{R},T_{R},U_{R})\|_{H^{3}}^{2}+\varepsilon,
\end{split}
\end{equation*}
for any sufficiently small constant $\delta>0$. Using the commutator estimate and integration by parts, we obtain
\begin{equation*}
\begin{split}
R_{3,2}= &-\int[\partial^{\alpha},u]\nabla T_{R}\partial^{\alpha}T_{R}-\int u\partial^{\alpha}\nabla T_{R}\partial^{\alpha}T_{R}\\
= &-\int[\partial^{\alpha},u]\nabla T_{R}\partial^{\alpha}T_{R}+\frac 12\int \mathrm{div}u|\partial^{\alpha}T_{R}|^{2}\lesssim (1+\varepsilon^{N}\|U_{R}\|_{H^{3}})\|T_{R}\|_{H^{3}}^2.
\end{split}
\end{equation*}
Again, using the commutator estimate, H\"older inequality and integration by parts, we have
\begin{equation*}
\begin{split}
R_{3,3}= &\frac{2\kappa}{3}\int[\partial^{\alpha},\frac 1n]\Delta T_{R}\partial^{\alpha}T_{R}+\frac{2\kappa}{3}\int\frac{\partial^{\alpha}\Delta T_{R}\partial^{\alpha}T_{R}}{n}\\
= &\frac{2\kappa}{3}\int[\partial^{\alpha},\frac 1n]\Delta T_{R}\partial^{\alpha}T_{R}-\frac{2\kappa}{3}\int\frac{|\partial^{\alpha}\nabla T_{R}|^{2}}{n}+\frac{2\kappa}{3}\int\frac{\nabla n\cdot\partial^{\alpha}\nabla T_{R}\partial^{\alpha}T_{R}}{n^{2}}\\
\lesssim &-\frac{2\kappa}{3}\int\frac{|\partial^{\alpha}\nabla T_{R}|^{2}}{n}+\kappa\delta\|\nabla T_{R}\|_{H^{3}}^{2}+\frac 1\delta(1+\varepsilon^{6N}\|N_{R}\|_{H^{3}}^{6})\|T_{R}\|_{H^{3}}^{2}.
\end{split}
\end{equation*}
Next, we will estimate $R_{3,4}$. It can be estimated by integration by parts, Young's inequality, \eqref{r1} and \eqref{r2} in the Appendix that
\begin{equation*}
\begin{split}
R_{3,4}= &-\frac{\hbar^{2}}{36}\int\partial^{\alpha}\Big(\mathrm{div}\Delta U_{R}+\varepsilon^{N}\frac{\nabla N_{R}\Delta U_{R}}{n}\Big)\partial^{\alpha}T_{R}\\
\lesssim &\kappa\delta\|\nabla T_{R}\|_{H^{k}}^{2}+\frac{\hbar^{4}}{\kappa\delta}\|\Delta U_{R}\|_{H^{k}}^{2}\notag\\&+\varepsilon^{N}(\|\frac{\nabla N_{R}}{n}\|_{L^{\infty}}\|\Delta U_{R}\|_{H^k}+\|\frac 1n\|_{H^{3}}\|\nabla N_{R}\|_{H^{3}}\|\Delta U_{R}\|_{L^{\infty}})\|T_{R}\|_{H^{k}}\\
\lesssim &\kappa\delta\|\nabla T_{R}\|_{H^{k}}^{2}+2\hbar^{4}\|\Delta U_{R}\|_{H^k}^{2}\notag\\&+(1+\varepsilon^{4N}\|(N_{R},T_{R})\|_{H^{3}}^{4})\|(\hbar\nabla N_{R},\hbar\nabla U_{R},T_{R})\|_{H^{3}}^{2},
\end{split}
\end{equation*}
for any sufficiently small constant $\delta>0$.

We put the above estimates together, choose $\delta$ sufficiently small, say $\delta=\frac 1{32}$, and recall \eqref{f1} to finish the proof.
\end{proof}

However, the inequalities in Lemma $\ref{G1}\sim$ Lemma \ref{G3} are not closed. To close the inequality, we need to get suitable controls for $\|\hbar^{2}\Delta N_{R}\|_{H^{3}}^{2}$. This leads us to show the following lemma.
\begin{lemma}\label{G4}
Under the assumptions in Lemma \ref{G1}, we have
\begin{equation}\label{g4}
\begin{split}
&\frac{d}{dt}\|\partial^{\alpha}(\hbar\nabla U_{R},\hbar^{2}\Delta N_{R},\hbar\Delta\Phi_{R})\|_{L^{2}}^2+\mu\hbar^{2}\|\partial^{\alpha}\Delta U_{R}\|_{L^{2}}^2+(\mu+\lambda)\hbar^{2}\|\partial^{\alpha}\nabla\mathrm{div}U_{R}\|_{L^{2}}^2\\
\lesssim & \frac{\kappa}{32}\|\nabla T_{R}\|_{H^{k}}^2+\frac{(\mu+\lambda)}{32}\|\mathrm{div}U_{R}\|_{H^{k}}^2+(1+\varepsilon^{5N}\|(U_{R},T_{R},N_{R},\hbar\nabla N_{R})\|_{H^{3}}^{10})\\&
\|(U_{R},N_{R},T_{R},\hbar\nabla N_{R},\hbar\nabla U_{R},\hbar\mathrm{div}U_{R},\hbar\Delta\Phi_{R},\hbar^{2}\Delta N_{R})\|_{H^{3}}^2+\varepsilon.
\end{split}
\end{equation}
\end{lemma}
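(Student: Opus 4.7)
The plan is to mirror the strategy of Lemma \ref{G2} one derivative higher: apply $\partial^\alpha$ to the momentum equation \eqref{rem1-2} and take the $L^2$ inner product against $-\hbar^2\Delta\partial^\alpha U_R$. The time-derivative term, after integrating by parts, becomes $\frac{d}{dt}\frac{\hbar^2}{2}\|\partial^\alpha\nabla U_R\|_{L^2}^2$, which accounts for the first block on the left-hand side. The two viscous terms $-\mu\partial^\alpha(\Delta U_R/n)$ and $-(\mu+\lambda)\partial^\alpha(\nabla\mathrm{div}\,U_R/n)$ yield, after commuting $1/n$ out and using Remark \ref{R1} plus the identity $\int\nabla\mathrm{div}\,u\cdot\Delta u\,dx=\|\nabla\mathrm{div}\,u\|_{L^2}^2$ (valid since $\nabla\times\nabla\mathrm{div}\,u=0$), the dissipation $\mu\hbar^2\|\partial^\alpha\Delta U_R\|_{L^2}^2/n+(\mu+\lambda)\hbar^2\|\partial^\alpha\nabla\mathrm{div}\,U_R\|_{L^2}^2/n$, modulo commutator remainders absorbed into either $\delta$-times the dissipation or into the triple norm via Lemma \ref{L2}.

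The crucial step is to extract from the quantum Bohm contribution $-\frac{\hbar^2}{12}\nabla\Delta N_R/n$ the time derivative of $\|\hbar^2\Delta N_R\|_{H^3}^2$. Following the mechanism used to treat $R_{2,4}$, I would peel off $1/n$ via a commutator, then integrate the outer $\nabla$ by parts to produce $\frac{\hbar^4}{12}\int\partial^\alpha\Delta N_R\,\partial^\alpha\mathrm{div}\,U_R/n\,dx$, and substitute $n\,\mathrm{div}\,U_R=-(\partial_t N_R+u\cdot\nabla N_R+\varepsilon(\cdots))$ from \eqref{rem1-1}. The leading term $-\int\partial^\alpha\Delta N_R\,\partial_t\partial^\alpha N_R/n^2\,dx$ reorganizes, modulo an $\int|\partial^\alpha\Delta N_R|^2\partial_t(1/n^2)\,dx$ bounded by Lemma \ref{L1}, into $\frac{d}{dt}\frac{\hbar^4}{24}\int|\partial^\alpha\Delta N_R|^2/n^2\,dx$, which is the desired $\hbar^4\|\Delta N_R\|^2$ time derivative. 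The transport piece $\int\partial^\alpha\Delta N_R\,\partial^\alpha(u\cdot\nabla N_R)/n^2\,dx$ is treated exactly as in Lemma \ref{G2} by a further integration by parts and a commutator bound.

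For the Poisson term $-\varepsilon^{-1/2}\nabla\Phi_R$, integrating by parts twice converts $\hbar^2\varepsilon^{-1/2}\int\partial^\alpha\nabla\Phi_R\cdot\Delta\partial^\alpha U_R\,dx$ into $-\hbar^2\varepsilon^{-1/2}\int\partial^\alpha\Delta\Phi_R\,\partial^\alpha\mathrm{div}\,U_R\,dx$; substituting $\mathrm{div}\,U_R$ from \eqref{rem1-1} replaces $\partial^\alpha\mathrm{div}\,U_R$ by $-\partial_t\partial^\alpha N_R/n$ at leading order; and \eqref{rem1-4} gives $\partial_t N_R=\sqrt{\varepsilon}\,\partial_t\Delta\Phi_R+\varepsilon\,\partial_t\Delta\phi^{(N)}$. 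The singular factor $\varepsilon^{-1/2}$ then cancels exactly against $\sqrt{\varepsilon}$, producing $\frac{d}{dt}\frac{\hbar^2}{2}\int|\partial^\alpha\Delta\Phi_R|^2/n\,dx$ up to an $O(\hbar^2\sqrt\varepsilon)$ inhomogeneous error and a $\partial_t(1/n)$ remainder bounded by Lemma \ref{L1}; this yields the final $\|\hbar\Delta\Phi_R\|_{H^3}^2$ piece on the left-hand side.

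The remaining terms (convection $u\cdot\nabla U_R$, pressure gradient $T\nabla N_R/n$, and the nonlinear forcing $F$) are handled by standard Moser-type commutator estimates via Remark \ref{R1} together with Lemmas \ref{L1}--\ref{L3}; the viscous contributions unavoidably generated (e.g.\ the piece of $F$ containing $\nabla U_R$ and $\nabla\mathrm{div}\,U_R$) are absorbed into the $\mu\hbar^2$ and $(\mu+\lambda)\hbar^2$ dissipations with a small constant $\delta=1/32$. I expect the main obstacle to be the bookkeeping of the many high-order terms inside $F$, particularly $\varepsilon^{N}\nabla N_R\cdot\nabla^2 N_R/n^2$, $\varepsilon^{2N}(\nabla N_R)^3/n^3$, and $\varepsilon^{N+1}(\nabla\tilde n\cdot\nabla N_R)\nabla N_R/n^3$: when paired with $-\hbar^2\Delta\partial^\alpha U_R$ they produce products whose $H^3$ norms must be distributed carefully between the factors, using Lemma \ref{L3} to trade $\nabla^2$ for $\Delta$ and Sobolev embedding $H^2\hookrightarrow L^\infty$, so that the combined $\hbar$ and $\varepsilon^N$ prefactors yield a bound controlled by the polynomial $(1+\varepsilon^{5N}\|\cdot\|_{H^3}^{10})$ times the triple norm appearing on the right-hand side of \eqref{g4}. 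Summing over $|\alpha|\leq 3$ completes the lemma.
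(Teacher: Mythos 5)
Your proposal is correct and follows essentially the same route as the paper: test the momentum equation against $-\hbar^2\partial^\alpha\Delta U_R$, recover the two viscous dissipations from $R_{4,5},R_{4,6}$, and convert the quantum and Poisson terms into $\frac{d}{dt}\frac{\hbar^4}{24}\int|\partial^\alpha\Delta N_R|^2/n^2$ and $\frac{d}{dt}\frac{\hbar^2}{2}\int|\partial^\alpha\Delta\Phi_R|^2/n$ by substituting $n\,\mathrm{div}\,U_R$ from \eqref{rem1-1} (at the $\Delta$-level) and $N_R = \sqrt\varepsilon\,\Delta\Phi_R + \varepsilon\Delta\phi^{(N)}$ from \eqref{rem1-4}. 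Note a small slip in your sketch of $R_{4,4}$: a single integration by parts of $\nabla\Delta N_R/n$ against $\Delta U_R$ produces $\partial^\alpha\Delta N_R\,\partial^\alpha\Delta\mathrm{div}\,U_R/n$ (not $\partial^\alpha\mathrm{div}\,U_R$), so one must use $n\partial^\alpha\Delta\mathrm{div}U_R=\partial^\alpha\Delta(n\,\mathrm{div}U_R)-[\partial^\alpha\Delta,n]\mathrm{div}U_R$ and the resulting leading term is $-\int\partial^\alpha\Delta N_R\,\partial_t\partial^\alpha\Delta N_R/n^2$, which is what actually reassembles into the claimed time derivative.
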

\begin{proof}
Applying the operator $\partial^{\alpha}$ to \eqref{rem1-2} and taking inner product with $-\hbar^{2}\partial^{\alpha}\Delta U_{R}$, we obtain
\begin{equation}
\begin{split}
&\frac{1}{2}\frac{d}{dt}\hbar^{2}\|\partial^{\alpha}\nabla U_{R}\|_{L^{2}}^2\\
=  &-\int\hbar^{2}\partial^{\alpha}F\cdot\partial^{\alpha}\Delta U_{R}+\hbar^{2}\int\partial^{\alpha}(u\cdot\nabla U_{R})\cdot \partial^{\alpha}\Delta U_{R}\notag\\&+\hbar^{2}\int\partial^{\alpha}(\frac{T\nabla N_{R}}{n})\cdot\partial^{\alpha}\Delta U_{R}-\frac{\hbar^{4}}{12}\int\partial^{\alpha}(\frac{\nabla\Delta N_{R}}{n})\cdot\partial^{\alpha}\Delta U_{R}\notag\\&-\mu\hbar^{2}\int\partial^{\alpha}(\frac{\Delta U_{R}}{n})\cdot\partial^{\alpha}\Delta U_{R}-(\mu+\lambda)\hbar^{2}\int\partial^{\alpha}(\frac{\nabla\mathrm{div}U_{R}}{n})\cdot\partial^{\alpha}\Delta U_{R}\notag\\&-\frac{1}{\varepsilon^{\frac{1}{2}}}\hbar^{2}\int\partial^{\alpha}\nabla\Phi_{R}\cdot \partial^{\alpha}\Delta U_{R}=\sum_{i=1}^{7}R_{4,i}.
\end{split}
\end{equation}
Similar  to $R_{2,1}$, the first term can be treated by Young's inequality, Lemma \ref{L2}, Lemma \ref{L3} and \eqref{r1}$\sim$\eqref{r5} in the Appendix,
\begin{equation*}
\begin{split}
R_{4,1}
\lesssim &\hbar^{4}\|\Delta U_{R}\|_{H^{k}}^{2}+\kappa\delta\|\nabla T_{R}\|_{H^{k}}^{2}+\frac{\hbar^{4}}{\kappa\delta}\|\Delta U_{R}\|_{H^{k}}^{2}\notag\\&+(1+\varepsilon^{10N}\|(N_{R},\hbar\nabla N_{R})\|_{H^{3}}^{10})\|(N_{R},T_{R},U_{R},\hbar\nabla N_{R},\hbar^{2}\Delta N_{R})\|_{H^{3}}^{2},
\end{split}
\end{equation*}for any sufficiently small constant $\delta>0$.
Next, by integration by parts twice and the commutator estimate, the second term $R_{4,2}$ on the right hand side can be bounded
\begin{equation*}
\begin{split}
R_{4,2}= &\hbar^{2}\int\partial^{\alpha}(u\cdot\nabla U_{R})\cdot \partial^{\alpha}\Delta U_{R}\\
= &\hbar^{2}\int[\partial^{\alpha},u]\nabla U_{R}\cdot\partial^{\alpha}\Delta U_{R}+\hbar^{2}\int u\cdot\partial^{\alpha}\nabla U_{R}\cdot\partial^{\alpha}\Delta U_{R}\\
= &\hbar^{2}\int[\partial^{\alpha},u]\nabla U_{R}\cdot\partial^{\alpha}\Delta U_{R}-\hbar^{2}\int \partial_{k}u^{i}\partial^{\alpha}\partial_{i} U_{R}^{j}\partial^{\alpha}\partial_{k} U_{R}^{j}+\frac{\hbar^{2}}2\int\mathrm{div}u|\partial^{\alpha}\nabla U_{R}|^{2}\\
\lesssim &\hbar^{4}\|\Delta U_{R}\|_{H^{k}}^{2}+(1+\varepsilon^{2N}\|U_{R}\|_{H^{3}}^{2})\|(U_{R},\hbar\nabla U_{R})\|_{H^{3}}^{2}.
\end{split}
\end{equation*}
For the third term, we apply integration by parts, the commutator estimate, and use Lemma \ref{L2} and Lemma \ref{L3} to get
\begin{equation*}
\begin{split}
R_{4,3}=& \hbar^{2}\int[\partial^{\alpha},\frac{T}{n}]\nabla N_{R}\cdot\partial^{\alpha}\Delta U_{R}+\hbar^{2}\int\frac{T}{n}\partial^{\alpha}\nabla N_{R}\cdot\partial^{\alpha}\Delta U_{R}\\
= &\hbar^{2}\int[\partial^{\alpha},\frac{T}{n}]\nabla N_{R}\cdot\partial^{\alpha}\Delta U_{R}\notag\\&-\hbar^{2}\int\nabla(\frac{T}{n}):\partial^{\alpha}\nabla U_{R}:\partial^{\alpha}\nabla N_{R}-\hbar^{2}\int\frac{T}{n}\partial^{\alpha}\nabla^{2} N_{R}:\partial^{\alpha}\nabla U_{R}\\
\lesssim &\mu\delta\|\nabla U_{R}\|_{H^{k}}^{2}+\hbar^{4}\|\Delta U_{R}\|_{H^{k}}^{2}+\frac{1}{\mu\delta}(1+\varepsilon^{2N}\|(N_{R},T_{R})\|_{H^{2}}^{2})\|\hbar^{2}\Delta N_{R}\|_{H^{k}}^{2}\notag\\&+(1+\varepsilon^{8N}\|(N_{R},T_{R})\|_{H^{3}}^{8})\|(N_{R},\hbar\nabla N_{R},\hbar\nabla U_{R})\|_{H^{k}}^{2},
\end{split}
\end{equation*}
for any sufficiently small constant $\delta>0$.

Similar to $R_{2,4}$, the estimate for $R_{4,4}$ is not straightforward since it involves higher order derivatives. We note that by \eqref{rem1-1} and the commutator, $R_{4,4}$ can be further decomposed into
\begin{equation*}
\begin{split}
R_{4,4}= &-\frac{\hbar^{4}}{12}\int\partial^{\alpha}(\frac{\nabla\Delta N_{R}}{n})\cdot\partial^{\alpha}\Delta U_{R}\\
= &-\frac{\hbar^{4}}{12}\int[\partial^{\alpha},\frac{1}{n}]\nabla\Delta N_{R}\cdot\partial^{\alpha}\Delta U_{R}-\frac{\hbar^{4}}{12}\int\frac{\partial^{\alpha}\nabla\Delta N_{R}\cdot\partial^{\alpha}\Delta U_{R}}{n}\\
= &-\frac{\hbar^{4}}{12}\int[\partial^{\alpha},\frac{1}{n}]\nabla\Delta N_{R}\cdot\partial^{\alpha}\Delta U_{R}\notag\\&
+\frac{\hbar^{4}}{12}\int\partial^{\alpha}\Delta N_{R}\nabla(\frac 1n)\cdot\partial^{\alpha}\Delta U_{R}+\frac{\hbar^{4}}{12}\int\frac{\partial^{\alpha}\Delta N_{R}\partial^{\alpha}\Delta\mathrm{div}U_{R}}n\\
= &-\frac{\hbar^{4}}{24}\frac{d}{dt}\int\frac{|\partial^{\alpha}\Delta N_{R}|^{2}}{n^{2}}-\frac{\hbar^{4}}{12}\int\frac{\partial^{\alpha}\Delta N_{R}\partial^{\alpha}\Delta(u\cdot\nabla N_{R})}{n^{2}}+R_{4,4,1},
\end{split}
\end{equation*}
where $R_{4,4,1}$ is denoted by
\begin{equation*}
\begin{split}
R_{4,4,1}
= &-\frac{\hbar^{4}}{12}\int[\partial^{\alpha},\frac{1}{n}]\nabla\Delta N_{R}\cdot\partial^{\alpha}\Delta U_{R}
+\frac{\hbar^{4}}{12}\int\partial^{\alpha}\Delta N_{R}\nabla(\frac 1n)\cdot\partial^{\alpha}\Delta U_{R}\notag\\&-\frac{\hbar^{4}}{12}\Big\{\int\frac{|\partial^{\alpha}\Delta N_{R}|^{2}}{n^{3}}\partial_{t}n+\varepsilon\int\frac{\partial^{\alpha}\Delta N_{R}\partial^{\alpha}\Delta(U_{R}\cdot\nabla\tilde{n}+N_{R}\mathrm{div}\tilde{u}+\Re_{1})}{n^{2}}\notag\\&+\int\frac{\partial^{\alpha}\Delta N_{R}[\partial^{\alpha}\Delta,n]\mathrm{div}U_{R}}{n^{2}}\Big\}.
\end{split}
\end{equation*}
By Lemma \ref{L1}, \eqref{f1} and Remark \ref{R1} in the Appendix, we derive
\begin{equation*}
\begin{split}
R_{4,4,1}
\lesssim &\hbar^{4}\|\nabla\mathrm{div} U_{R}\|_{H^{k}}^{2}+\hbar^{4}\|\Delta U_{R}\|_{H^{k}}^{2}\notag\\&+(1+\varepsilon^{3N}\|(N_{R},U_{R})\|_{H^{3}}^{6})\|(N_{R},U_{R},\hbar\nabla U_{R},\hbar\nabla N_{R},\hbar^{2}\Delta N_{R})\|_{H^{3}}^{2}+\varepsilon.
\end{split}
\end{equation*}
For the second term, from the commutator estimate and integration by part two times, we arrive at
\begin{equation*}
\begin{split}
&-\frac{\hbar^{4}}{12}\int\frac{\partial^{\alpha}\Delta N_{R}\partial^{\alpha}\Delta(u\cdot\nabla N_{R})}{n^{2}}\\= &-\frac{\hbar^{4}}{12}\int\frac{\partial^{\alpha}\Delta N_{R}[\partial^{\alpha}\Delta,u]\cdot\nabla N_{R}}{n^{2}}-\frac{\hbar^{4}}{12}\int\frac{\partial^{\alpha}\Delta N_{R}u\cdot\partial^{\alpha}(\Delta\nabla N_{R})}{n^{2}}\\
= &-\frac{\hbar^{4}}{12}\int\frac{\partial^{\alpha}\Delta N_{R}[\partial^{\alpha}\Delta,u]\cdot\nabla N_{R}}{n^{2}}+\frac{\hbar^{4}}{24}\int\mathrm{div}(\frac{u}{n^{2}})|\partial^{\alpha}\Delta N_{R}|^{2}\\
\lesssim &(1+\varepsilon^{2N}\|(N_{R},U_{R})\|_{H^{3}}^{2})\|(N_{R},\hbar^{2}\Delta N_{R})\|_{H^{k}}^{2}+\hbar^{4}\|\Delta U_{R}\|_{H^{k}}^{2}.
\end{split}
\end{equation*}
Summarizing, we have
\begin{equation*}
\begin{split}
R_{4,4}\lesssim &-\frac{\hbar^{4}}{24}\frac{d}{dt}\int\frac{|\partial^{\alpha}\Delta N_{R}|^{2}}{n^{2}}+\hbar^{4}\|\Delta U_{R}\|_{H^{k}}^{2}+\varepsilon\notag\\&+(1+\varepsilon^{3N}\|(N_{R},U_{R})\|_{H^{3}}^{6})\|(N_{R},U_{R},\hbar\nabla U_{R},\hbar\nabla N_{R},\hbar^{2}\Delta N_{R})\|_{H^{3}}^{2}.
\end{split}
\end{equation*}
Now we turn to estimate $R_{4,5}$ and $R_{4.6}$. Using integration by parts and the commutator estimate, we observe
\begin{equation*}
\begin{split}
R_{4,5}= &-\hbar^{2}\mu\int\partial^{\alpha}(\frac{\Delta U_{R}}{n})\cdot\partial^{\alpha}\Delta U_{R}\\
= &-\hbar^{2}\mu\int[\partial^{\alpha},\frac 1n]\Delta U_{R}\cdot\partial^{\alpha}\Delta U_{R}-\hbar^{2}\mu\int\frac{|\partial^{\alpha}\Delta U_{R}|^{2}}{n}\\
\lesssim &-\hbar^{2}\mu\int\frac{|\partial^{\alpha}\Delta U_{R}|^{2}}{n}+\hbar^{2}\mu\delta\|\Delta U_{R}\|_{H^{k}}^{2}+\frac{1}{\delta}(1+\varepsilon^{6N}\|N_{R}\|_{H^{3}}^{6})\|\hbar\nabla U_{R}\|_{H^{3}}^{2}.
\end{split}
\end{equation*}
For the estimate of $R_{4,6}$, we apply integration by part twice to obtain
\begin{equation*}
\begin{split}
R_{4,6}= &-\hbar^{2}(\mu+\lambda)\int\partial^{\alpha}(\frac{\nabla\mathrm{div}U_{R}}{n})\cdot\partial^{\alpha}\Delta U_{R}\\
= &-\hbar^{2}(\mu+\lambda)\int[\partial^{\alpha},\frac 1n]\nabla\mathrm{div}U_{R}\cdot\partial^{\alpha}\Delta U_{R}
-\hbar^{2}(\mu+\lambda)\int\frac{|\partial^{\alpha}\nabla\mathrm{div}U_{R}|^{2}}{n}\notag\\&
-\hbar^{2}(\mu+\lambda)\Big\{\int\frac{\partial_{k}n\partial^{\alpha}\partial_{j}\mathrm{div}U_{R}\partial^{\alpha}\partial_{k}U_{R}^{j}}{n^{2}}
-\int\frac{\partial_{j}n\partial^{\alpha}\partial_{k}\mathrm{div}U_{R}\partial^{\alpha}\partial_{k}U_{R}^{j}}{n^{2}}\Big\}
\\
\lesssim &-\hbar^{2}(\mu+\lambda)\int\frac{|\partial^{\alpha}\nabla\mathrm{div}U_{R}|^{2}}{n}
+\hbar^{2}(\mu+\lambda)\delta\|\nabla\mathrm{div}U_{R}\|_{H^{k}}^{2}\notag\\&+\hbar^{2}(\mu+\lambda)\mu\delta\|\Delta U_{R}\|_{H^{k}}^{2}+\frac{1}{\delta}(1+\varepsilon^{2N}\|N_{R}\|_{H^{3}}^{2})\|\hbar\nabla U_{R}\|_{H^{3}}^{2}\notag\\&
+\frac{1}{\mu\delta}(1+\varepsilon^{6N}\|N_{R}\|_{H^{3}}^{6})\|\hbar\mathrm{div}U_{R}\|_{H^{3}}^{2}.
\end{split}
\end{equation*}
In what follows, thanks to \eqref{rem1-1} and \eqref{rem1-4}, we divide $R_{4,7}$ into following
\begin{equation*}
\begin{split}
R_{4,7}= &-\varepsilon^{-\frac{1}{2}}\hbar^{2}\int\partial^{\alpha}\nabla\Phi_{R}\cdot\partial^{\alpha}\Delta U_{R}
= \varepsilon^{-\frac{1}{2}}\hbar^{2}\int\partial^{\alpha}\Delta\Phi_{R}\partial^{\alpha}\mathrm{div}U_{R}\\
= &-\varepsilon^{-\frac{1}{2}}\hbar^{2}\int\frac{\partial^{\alpha}\Delta\Phi_{R}[\partial^{\alpha},n]\mathrm{div}U_{R}}{n}\notag\\
&-\varepsilon^{-\frac{1}{2}}\hbar^{2}\int\frac{\partial^{\alpha}\Delta\Phi_{R}\partial^{\alpha}(\partial_{t}N_{R}+u\cdot\nabla N_{R}+\varepsilon(\nabla\tilde{n}\cdot U_{R}+\mathrm{div}\tilde{u}N_{R}+\Re_{1}))}{n}\\
= &-\frac{\hbar^{2}}2\frac{d}{dt}\int\frac{|\partial^{\alpha}\Delta\Phi_{R}|^{2}}{n}-\varepsilon^{-\frac{1}{2}}\hbar^{2}\int\frac{\partial^{\alpha}\Delta\Phi_{R}\partial^{\alpha}(u\cdot\nabla N_{R})}{n}+R_{4,7,1},
\end{split}
\end{equation*}
where we denote $R_{4,7,1}$ as
\begin{equation*}
\begin{split}
R_{4,7,1}
= &-\varepsilon^{-\frac{1}{2}}\hbar^{2}\int\frac{\partial^{\alpha}\Delta\Phi_{R}[\partial^{\alpha},n]\mathrm{div}U_{R}}{n}\notag\\&+\frac{\hbar^{2}}2\int|\partial^{\alpha}\Delta\Phi_{R}|^{2}\partial_{t}(\frac 1n)-\hbar^{2}\varepsilon^{\frac{1}{2}}\int\frac{\partial^{\alpha}\Delta\Phi_{R}\partial^{\alpha}\Delta\partial_{t}\phi^{(N)}}{n}\notag\\&-\varepsilon^{\frac{1}{2}}\hbar^{2}\int\frac{\partial^{\alpha}\Delta\Phi_{R}\partial^{\alpha}(\nabla\tilde{n}\cdot U_{R}+\mathrm{div}\tilde{u}N_{R}+\Re_{1})}{n}.
\end{split}
\end{equation*}
For the first one $R_{4,7,1}$, since $\phi^{N}$ is known and is assumed to be as smooth as we want, thanks to \eqref{rem1-4}, Lemma \ref{L1}, Lemma \ref{L2}, Young's inequality and Remark \ref{R1} in the Appendix, we have
\begin{equation*}
\begin{split}
R_{4,7,1}
\lesssim &\hbar^{2}\|\Delta\Phi_{R}\|_{H^{k}}^{2}+(1+\varepsilon^{2N-1}\|(U_{R},N_{R})\|_{H^{3}}^{2})\|(N_{R},U_{R},\hbar\Delta\Phi_{R})\|_{H^{3}}^{2}+\varepsilon,
\end{split}
\end{equation*}
where we assume $N\geq 1$, such that $\varepsilon^{2N-1}\leq\varepsilon^{N}$.
For the second term of $R_{4,7}$, thanks to the commutator estimate, Sobolev embedding and \eqref{rem1-4}, we have
\begin{equation*}
\begin{split}
&-\hbar^{2}\frac{1}{\varepsilon^{\frac{1}{2}}}\int\frac{\partial^{\alpha}\Delta\Phi_{R}[\partial^{\alpha},u]\cdot\nabla N_{R}}{n}-\hbar^{2}\frac{1}{\varepsilon^{\frac{1}{2}}}\int\frac{\partial^{\alpha}\Delta\Phi_{R} u\cdot\partial^{\alpha}\nabla N_{R}}{n}\\
\lesssim &\hbar^{2}\|\Delta\Phi_{R}\|_{H^{k}}\Big\{\|\nabla u\|_{L^{\infty}}(\|\nabla\Delta\Phi_{R}\|_{H^{k-1}}+\varepsilon^{\frac 12})+\|u\|_{H^{k}}(\|\nabla\Delta\Phi_{R}\|_{L^{\infty}}+\varepsilon^{\frac 12})\Big\}\notag\\&+\frac{\hbar^{2}}2\int\mathrm{div}(\frac un)|\partial^{\alpha}\Delta\Phi_{R}|^{2}-\hbar^{2}\varepsilon^{\frac 12}\int\frac{\partial^{\alpha}\Delta\Phi_{R}u\cdot\partial^{\alpha}\nabla\Delta\phi^{(N)}}{n}\\
\lesssim &\hbar^{2}(1+\varepsilon^{2N}\|(U_{R},N_{R})\|_{H^{3}}^{2})\|\Delta\Phi_{R}\|_{H^{3}}^{2}+\varepsilon.
\end{split}
\end{equation*}
Summing up the estimates for $R_{4,7}$ to derive
\begin{equation*}
\begin{split}
R_{4,7}\lesssim &-\frac{\hbar^{2}}2\frac{d}{dt}\int\frac{|\partial^{\alpha}\Delta\Phi_{R}|^{2}}{n}+(1+\varepsilon^{N}\|(U_{R},N_{R})\|_{H^{3}}^{2})\|(N_{R},U_{R},\hbar\Delta\Phi_{R})\|_{H^{3}}^{2}+\varepsilon.
\end{split}
\end{equation*}
Now, taking $\delta=\frac 1{32}$ and $\hbar\ll 1$ such that $\hbar^{4}\ll\hbar^{2}$, then we can complete the proof of Lemma \ref{G4}.
\end{proof}

Then integrating these estimates from Lemma \ref{G1} to Lemma \ref{G4} over $[0,t]\subseteq[0,\min\{\tau_{\varepsilon},\tau\}]$, and summing them up for all multi-index $\alpha$ with $0\leq|\alpha|\leq 3$, we obtain Proposition \ref{prop1}.
\section{Proof of Theorem \ref{thm3}}\label{proof}
\begin{proof}
Thanks to \eqref{def-A}, we obtain the following Gronwall type inequality
\begin{equation*}
\begin{split}
\frac{d}{dt}|\!|\!|(N_{R},U_{R},T_{R},\nabla\Phi_{R})|\!|\!|_{3}^2\leq & C_{1}(1+\varepsilon^{5N}\|(N_{R},U_{R},T_{R})\|_{H^{3}}^{10})|\!|\!|(N_{R},U_{R},T_{R},\nabla\Phi_{R})|\!|\!|_{3}^2\\&+\varepsilon.
\end{split}
\end{equation*}
Next, there exists $\varepsilon_{1}>0$ such that for any $0<\varepsilon<\varepsilon_{1}$, we have $$\varepsilon^{5N}\|(N_{R},U_{R},T_{R})\|_{H^{3}}^{10}<1,$$
thanks to \eqref{prior}. Moreover, we derive $$\frac{d}{dt}|\!|\!|(N_{R},U_{R},T_{R},\nabla\Phi_{R})|\!|\!|_{H^{3}}^2\leq 2C_{1}|\!|\!|(N_{R},U_{R},T_{R},\nabla\Phi_{R})|\!|\!|_{3}^2+\varepsilon.$$
Let $C_{0}=|\!|\!|(N_{R},U_{R},T_{R},\nabla\Phi_{R})(0)|\!|\!|_{3}^2$, hence we get the inequality from the Gronwall type inequality,
$$|\!|\!|(N_{R},U_{R},T_{R},\nabla\Phi_{R})|\!|\!|_{3}^2\leq e^{2C_{1}t}(C_{0}+\varepsilon t).$$

For arbitrarily given $0<\tau<\tau_*$ in Theorem \ref{thm1}, there exists $\varepsilon_{2}>0$, such that $\varepsilon t<1$ for any $0<\varepsilon<\varepsilon_{2}$. If we chose $\tilde{C}$ sufficiently large such that $\tilde{C}\geq e^{2C_{1}\tau}(C_{0}+1),$ then
$$|\!|\!|(N_{R},U_{R},T_{R},\nabla\Phi_R)|\!|\!|_{3}^2\leq\tilde{C}.$$ The proof is then complete by classical continuity method.
\end{proof}

\bigskip

\section{Appendix}
\begin{remark}\label{R1}
Let $\alpha$ be any multi-index with $|\alpha|=k$, $k\geq 1$ and $p\in(1,\infty)$. Then there holds
\begin{equation}\label{r1}
\begin{split}
\|\partial_{x}(fg)\|_{L^{p}}\lesssim &\|f\|_{L^{p_{1}}}\|g\|_{H^{k,p_{2}}} +\|f\|_{H^{k,p_{3}}}\|g\|_{L^{p_{4}}}\\
\|[\partial_{x},f]g\|_{L^{p}}\lesssim &\|\nabla f\|_{L^{p_{1}}}\|g\|_{H^{k-1,p_{2}}} +\|f\|_{H^{k,p_{3}}}\|g\|_{L^{p_{4}}},
\end{split}
\end{equation}
where $f,g\in \mathbb{S}$, the Schwartz class and $p_{2},p_{3}\in(1,\infty)$ such that $\frac 1p=\frac 1{p_{1}}+\frac 1{p_{2}}=\frac 1{p_{3}}+\frac 1{p_{4}}$.
\end{remark}
Moreover, due to Sobolev embedding and Remark \ref{R1} in the Appendix, then in particular, we can obtain with $k\geq2$,
\begin{equation}\label{r2}
\begin{split}
\|f\|_{H^{k}}\lesssim &\|f\|_{L^{\infty}}\|g\|_{H^{k}} +\|f\|_{H^{k}}\|g\|_{L^{\infty}} \lesssim \|f\|_{H^{k}}\|g\|_{H^{k}}.
\end{split}
\end{equation}
Hence, it is obvious that, for any $m\geq1$, together with $\frac 12\leq n\leq\frac 23$, we can deduce the following inequality
\begin{equation}\label{r3}
\begin{split}
\|(\frac 1n)^{m}\|_{H^{3}}\lesssim &\|(\frac 1n)^{m-1}\|_{L^{\infty}}\|\frac 1n\|_{H^{3}}+\|(\frac 1n)^{m-1}\|_{H^{3}}\|\frac 1n\|_{L^{\infty}}\\
\lesssim &\|\frac 1n\|_{H^{3}}+\|(\frac 1n)^{m-1}\|_{H^{3}}
\lesssim \cdots\cdots
\lesssim \varepsilon+\varepsilon^{3N}\|N_{R}\|_{H^{3}}^{3}.
\end{split}
\end{equation}
Similarly, we derive
\begin{equation}\label{r5}
\begin{split}
\|(\nabla N_{R})^{m}\|_{H^{k}}\lesssim &\|(\nabla N_{R})^{m-1}\|_{H^{k}}\|\nabla N_{R}\|_{L^{\infty}}+\|\nabla N_{R}\|_{L^{\infty}}^{m-1}\|\nabla N_{R}\|_{H^{k}}\\
\lesssim &\cdots\cdots\lesssim \|N_{R}\|_{H^{3}}^{m-1}\|\nabla N_{R}\|_{H^{k}}.
\end{split}
\end{equation}
\section*{Acknowledgments}
We would like to thank the anonymous reviewers for their valuable suggestions and fruitful
comments which led to significant improvement this work.

\end{document}